\tikzset{
	symbol/.style={
		draw=none,
		every to/.append style={
			edge node={node [sloped, allow upside down, auto=false]{$#1$}}}
	}
}
\newcommand{\rmv}[1]{}
\newcommand{\Z}{{\mathbb Z}}
\newcommand{\R}{{\mathbb R}}
\newcommand{\F}{{\mathbb F}}
\newcommand{\C}{{\mathbb C}}
\newcommand{\Q}{{\mathbb Q}}
\renewcommand{\L}{{\mathcal L}}
\renewcommand{\P}{{\mathfrak p}}
\newcommand{\z}{{\zeta}}
\newcommand{\s}{{\sigma}}
\newcommand{\disc}{{\text{disc}}}
\begin{document}
\title{On the ideal shortest vector problem over random rational primes}
\titlerunning{On the ideal shortest vector problem over random rational primes}
\author{Yanbin Pan\inst{1}
  \and Jun Xu\inst{2}
  \and Nick Wadleigh\inst{3}
  \and Qi Cheng\inst{3}  %
 }
\authorrunning{Y. Pan, J. Xu, \emph{et al.}}
%

 \institute{
 Key Laboratory of Mathematics Mechanization, 
    Academy of
    Mathematics and Systems Science,  Chinese Academy of Sciences,
    Beijing 100190, China.\\
 \email{ panyanbin@amss.ac.cn}
 \and
 State Key Laboratory of Information Security, Institute of Information Engineering 
 Chinese Academy of Sciences, Beijing 100093, China \\
 \email{ xujun@iie.ac.cn}
  \and
 School of Computer Science, University of Oklahoma, 
 Norman, OK 73019, USA.\\
\email{ ndwadleigh@gmail.com, qcheng@ou.edu}
 }

\maketitle              
\begin{abstract}
  Any non-zero ideal in a number field can be factored into a product of prime
  ideals. In this paper we report a surprising connection between the complexity
  of the shortest vector problem (SVP) of prime ideals in number fields and
  their decomposition groups. When applying the result to number fields popular
  in lattice based cryptosystems, such as power-of-two cyclotomic fields, we show that a
  majority of rational primes lie under prime ideals admitting a polynomial time
  algorithm for SVP. Although the shortest vector problem of ideal lattices
  underpins the security of the Ring-LWE cryptosystem, this work does not break
  Ring-LWE, since the security reduction is from the worst case ideal SVP to the
  average case Ring-LWE, and it is one-way.

\keywords{Ring-LWE \and Ideal lattice \and Average case computational
	complexity}
\end{abstract}
\section{Introduction}

Due to their conjectured ability to resist quantum computer attacks, lattice-based cryptosystems 
have drawn considerable attention. In 1996, Ajtai \cite{Ajtai96} pioneered the research on worst-case to average-case reduction for the Short Integer Solution problem (SIS). In 2005, Regev \cite{Regev09} presented a worst-case to average-case (quantum) reduction for the Learning With Errors problem (LWE). SIS and LWE became two important cryptographic assumptions, and a large number of cryptographic schemes based on these two problems have been designed. However, 
the common drawback of such schemes is their limited efficiency.

To improve the efficiency of lattice-based schemes, some special algebraic structures are employed.
The first lattice-based scheme with some algebraic structure was the NTRU public key cryptosystem \cite{HoffsteinPS98}, which was introduced by Hoffstein, Pipher and Silverman in 1996. It works in the convolution ring $\mathbb{Z}[x]/(x^p-1)$ where $p$ is a prime. The cyclic nature of the ring $\mathbb{Z}[x]/(x^p-1)$ contributes to NTRU's efficiency, and makes NTRU one of the most popular schemes. Later the ring was employed in many other cryptographic primitives, such as \cite{Micciancio02,LyubashevskyM06,PeikertR06,Micciancio07,StehleS11,BernsteinC0V17}.

In 2009, Stehl\'{e} \emph{et al}. \cite{SSTX09} introduced  a
structured and more efficient variant of LWE involving the  ring $\F_p[x]/(x^N+1)$ where $N$
is a power of 2 and $p$ is a prime satisfying $p \equiv 3 \pmod{8}$. In 2010, Lyubashevsky, Peikert and Regev \cite{LPR10} presented a ring-based variant of LWE, called Ring-LWE. The hardness of problems in \cite{SSTX09,LPR10} is based on worst-case assumptions on ideal lattices. Recently, Peikert, Regev and Stephens-Davidowitz \cite{PRS17} presented a polynomial time quantum reduction from (worst-case) ideal lattice problems to Ring-LWE for any modulus and any number field. Lots of schemes  employ the ring $\mathbb{Z}[x]/(x^N+1)$ where $N$ is a power of 2, for example, NewHope \cite{NewHope}, Crystals-Kyber \cite{Kyber}, and LAC \cite{LAC} submitted to NIST's post-quantum cryptography standardization.
Although solving the ideal SVP does not necessarily break Ring-LWE,
understanding the hardness of ideal SVP is no doubt a very important first step to understand the hardness of Ring-LWE.

\subsection{Previous works}

Principal ideal lattices are a class of important ideal lattices which can be
generated by a single ring element.
There is a line of work focusing on the principal ideal SVP.
Based on \cite{CGS14,Ber14}, solving approx-SVP
problems on principal ideal lattices can be divided into the following two
steps: Step 1 is finding an ideal generator by using class group computations.
In this step, a quantum polynomial time algorithm is presented by Biasse and
Song \cite{BS16}, which is based on the work \cite{EHKS14}; a classical
subexponential time algorithm was given by Biasse, Espitau, Fouque, G\'{e}lin
and Kirchner \cite{BEFGK17}. Step 2 is shortening the ideal generator in Step 1
with the log-unit lattice. This step was analyzed by Cramer, Ducas, Peikert and
Regev \cite{CDPR16}. Then a quantum polynomial time algorithm for approx-SVP,
with a $2^{\tilde{O}(\sqrt{N})}$ approximation factor, on principal ideal
lattices in cyclotomic number fields was presented in \cite{CDPR16}.

In 2017, Cramer, Ducas and Wesolowski \cite{CDW17} extended the case of principal ideal lattices in \cite{CDPR16} to the case of a general ideal lattice in a cyclotomic ring of prime-power conductor. For approx-SVP on ideal lattices, the result in \cite{CDW17} is better than the BKZ algorithm \cite{SE94} when the approximation factor is larger than $2^{\tilde{O}(\sqrt{N})}$. Ducas, Plancon and Wesolowski \cite{DPW19} analyzed the approximation factor $2^{\tilde{O}(\sqrt{N})}$ in \cite{CDPR16,CDW17} to determine the specific dimension $N$ so that the corresponding algorithms outperform BKZ for an ideal lattice in cyclotomic number fields.
Recently, Pellet-Mary, Hanrot and Stehl\'{e} \cite{PHS19}, inspired by the algorithms in \cite{CDPR16,CDW17}, proposed an algorithm to solve approx-SVP with the approximation factor $2^{\tilde{O}(\sqrt{N})}$ in ideal lattices for all number fields,  aiming to provide trade-offs between the approximation factor  and the running time. However, there is an exponential pre-processing phase.

Inspired by Bernstein's logarithm-subfield attack \cite{Ber14}, Albrecht, Bai and Ducas \cite{ABD16} and Cheon, Jeong and Lee \cite{CJL16} independently  proposed two similar subfield attacks in 2016 against overstretched NTRU that has much larger modulus than in the NTRUEncrypt standard.  Later, Kirchner and Fouque \cite{KF17} proposed  a  variant of the subfield attacks to improve these two attacks in practice. A typical subfield attack consists of three steps: mapping the lattice to some subfield, solving the lattice problem in the subfield and finally lifting the solution to the full field.

\subsection{Our results}\label{ourres}

In this paper, we investigate the SVP for lattices corresponding to prime ideals in number fields normal over $\Q$. 
Every nonzero ideal in a Dedekind Domain can be factored uniquely into a product of prime ideals, so short
vectors in prime ideals may help us to find short vectors in general
ideals. If, in a general prime ideal $\mathfrak{p}$, we are able to efficiently find a vector with length within the
Minkowski bound for $\mathfrak{p}$, then for an ideal $\mathfrak{a}$ with few prime ideal factors, we will be able
to approximate the shortest vector in $\mathfrak{a}$ to within a factor much better than what is achieved by the LLL \cite{LLL} or
BKZ \cite{BKZ} algorithms. The most difficult step in factoring an ideal is
actually factorization of an integer (the norm of the ideal),
which can be done in polynomial time by quantum computers,
or in subexponential time by classical computers.

Consider a finite Galois extension $\mathbb{ L } \cong \Q[x]/(f(x))$ of $ \Q $,
and let $ \mathfrak{P} $ be a prime ideal  in the ring of integers
$O_\mathbb{L}$ of $\mathbb{L}$.
The subgroup of $Gal(\mathbb{L}/\Q)$ that stabilizes $ \mathfrak{P} $ set-wise is known as the
\textit{decomposition group} of $ \mathfrak{P} $. Let  $\mathbb{ K }\subset\mathbb{L}$ be the subfield fixed
by the decomposition group of $ \mathfrak{P} $.   $\mathbb{ K }$ is called the \textit{decomposition field} of $ \mathfrak{P} $.
To find a short vector in $ \mathfrak{P} $, we can
search for a short vector in the lattice $  \mathfrak{P} \cap \mathbb{K }$,
which may have smaller rank. More precisely, for a rational prime $ p $, if $pO_{\mathbb{L}}$ is
factored into a product of $ g $ prime ideals in
$O_{\mathbb{L}} $, we can reduce the problem of finding
a short vector in any of these prime ideals
to a problem of finding a short vector in
a rank-$ g $ lattice, provided that a basis of
$ O_{\mathbb{K}}  $ can be found efficiently. Equivalently, the fewer the number of irreducible factors of $ f(x) $
over $ \F_p $, the more efficiently we may solve SVP for prime ideals lying above $ p $ . One argues from general facts of algebraic number theory that the determinant of the sublattice
is not too large compared to the original lattice in order to relate Minkowski type $\lambda_1$ bounds for the two lattices.

We go on to apply the foregoing idea to the rings $ \Z[x]/(x^{2^n}+1) $, which are quite
popular in cryptography. We show that there is a hierarchy for the hardness of
SVP for these prime ideal lattices.  This arises from the observation that the decomposition groups (or their
index-two subgroups) form a chain in the subgroup lattice (See Appendices A and B). Roughly speaking, we can classify such
prime ideal lattices into $n$ distinct classes, and for a prime ideal lattice in
the $r$-th class, we can find its shortest vector by solving SVP in a
dimension-$2^r$ lattice. This suggests that the difficulty of prime ideal SVP
can change dramatically from ideal to ideal, an interesting phenomenon that has,
to our knowledge, not been pointed out in the literature.
By considering some of these classes, we
prove that a nontrivial fraction of prime ideals admit an efficient SVP
algorithm.

\begin{theorem}\label{prop:p}
	Let $ N = 2^n $, where $ n $ is a positive integer.
	Let $ {\mathfrak p} $ be a prime ideal in the ring  $ \Z[x]/(x^{N}+1) $, and suppose $\mathfrak{p}$
	contains a prime number $ p \equiv \pm 3 \pmod{8} $. Then under the coefficient embedding, the shortest vector
	in $\mathfrak{p}$
	can be found in time $ poly(N, \log p) $, and the length of the shortest vector is exactly $\sqrt{p}$.
\end{theorem}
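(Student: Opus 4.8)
The key observation is that a prime $p \equiv \pm 3 \pmod 8$ is inert or splits in a controlled way in $\Z[x]/(x^N+1)$. First I would recall that $x^{2^n}+1$ is the $2^{n+1}$-th cyclotomic polynomial, so $\Z[x]/(x^N+1) = \Z[\zeta]$ with $\zeta$ a primitive $2^{n+1}$-th root of unity, and the factorization of $p$ is governed by the order of $p$ in $(\Z/2^{n+1}\Z)^*$. A standard computation shows that when $p \equiv \pm 3 \pmod 8$, the subgroup $\langle p \rangle \subseteq (\Z/2^{n+1}\Z)^*$ has index $2$ (for $n \ge 2$), so $p\Z[\zeta]$ factors into exactly $g = 2$ prime ideals, each with residue degree $f = N/2 = 2^{n-1}$. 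Thus $\mathfrak p$ has norm $p^{N/2}$, and by the discussion preceding the theorem, the decomposition field $\mathbb K$ is the unique quadratic subfield of $\Q(\zeta)$ in which $p$ is inert, namely $\Q(\sqrt{\pm 2})$ (the sign chosen so that $p$ is inert), with $\mathfrak p \cap \mathbb K = (p)$ since $p$ is inert there.

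Next I would make the sublattice reduction explicit. Following the general strategy in Section~\ref{ourres}, finding a short vector in $\mathfrak p$ reduces to finding a short vector in the rank-$g = 2$ lattice $\mathfrak p \cap \mathbb K$. Since $p$ is inert in $\mathcal O_{\mathbb K} = \Z[\omega]$ (where $\omega = \sqrt{-2}$ when $p \equiv 3 \pmod 8$, resp.\ $\omega = (1+\sqrt 2)$-type generator when $p\equiv 5 \pmod 8$; I would pin down the exact case split), the ideal $\mathfrak p \cap \mathbb K$ equals $p\mathcal O_{\mathbb K}$, a rank-$2$ lattice whose shortest vector is computed directly: under the coefficient embedding into $\Z[\zeta]$, the element $p$ itself has length exactly $\sqrt p$ after accounting for how the embedding of $\mathbb K$ sits inside the embedding of $\mathbb L$ — here I must be careful that the coefficient embedding of $\mathbb L$, restricted to $\mathbb K$, scales lengths by the right factor so that a vector of norm $p$ in $\mathcal O_{\mathbb K}$ maps to one of Euclidean length $\sqrt p$.

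For the matching lower bound $\lambda_1(\mathfrak p) \ge \sqrt p$, I would argue that every nonzero $\alpha \in \mathfrak p$ has algebraic norm divisible by $N_{\mathbb L/\Q}(\mathfrak p) = p^{N/2}$, hence $|N_{\mathbb L/\Q}(\alpha)| \ge p^{N/2}$; combining this with the AM–GM inequality relating the product of the archimedean absolute values of $\alpha$ to $\|\alpha\|_2$ under the coefficient (equivalently, the canonical) embedding — using that for $x^N+1$ the coefficient embedding and the canonical embedding agree up to an orthogonal transformation and a uniform scaling by $\sqrt N$ — yields $\|\alpha\|_2 \ge \sqrt p$. Since $p \in \mathfrak p$ achieves this, the shortest vector has length exactly $\sqrt p$, and it is found in time $\mathrm{poly}(N, \log p)$ because we need only (a) determine the residue of $p$ mod $8$, (b) write down $p$ as the candidate vector, and (c) verify via the norm bound; no lattice reduction is actually needed.

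The main obstacle I anticipate is the bookkeeping in the last paragraph: correctly normalizing the coefficient embedding versus the canonical embedding for the field $\Q(\zeta_{2^{n+1}})$ so that the inequality $\|\alpha\|_2 \ge \sqrt p$ comes out with the exact constant $\sqrt p$ rather than $\sqrt p$ up to a factor, and confirming that $p$'s image genuinely has length $\sqrt p$ (not $\sqrt{Np}$ or $p$) — i.e., checking that $1 \in \Z[x]/(x^N+1)$ has coefficient-norm $1$, so $p$ has coefficient-norm $p$, while simultaneously the norm-based lower bound via the canonical embedding, after rescaling, lands on the same value. The algebraic number theory (index-$2$ decomposition group, identification of the quadratic decomposition field) is routine given the cyclotomic structure and the congruence condition on $p$.
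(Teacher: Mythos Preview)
Your lower-bound argument via AM--GM is fine and does indeed give $\|\alpha\|_{\text{coeff}}\ge\sqrt{p}$ for every nonzero $\alpha\in\mathfrak p$. The gap is in the upper bound. You claim that $p$ is \emph{inert} in the decomposition field $\mathbb K$, so that $\mathfrak p\cap O_{\mathbb K}=pO_{\mathbb K}$, and that the element $p$ itself witnesses length $\sqrt p$. Both of these are wrong. By the defining property of the decomposition field, $p$ \emph{splits completely} in $\mathbb K$ (this is Theorem~29 of Marcus, cited in the paper): the ideal $\mathfrak c=\mathfrak p\cap O_{\mathbb K}$ is a degree-one prime of norm $p$, not the full ideal $(p)$. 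And the element $p\in\Z[\zeta]$ has coefficient vector $(p,0,\dots,0)$, hence Euclidean norm $p$, not $\sqrt p$; no rescaling between the embeddings of $\mathbb K$ and $\mathbb L$ changes this, since the coefficient embedding of $\mathbb L$ simply restricts to that of $O_{\mathbb K}$ on the appropriate coordinates.

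The paper's route to the upper bound is different: it writes down an explicit basis for the rank-$2$ (when $p\equiv 5\pmod 8$) or rank-$4$ (when $p\equiv 3\pmod 8$) sublattice $\mathfrak c$, shows by a direct congruence computation that every vector $v\in\mathfrak c$ satisfies $\|v\|^2\equiv 0\pmod p$, and then invokes Minkowski's bound to force $\|v\|^2<2p$ for the shortest vector, whence $\|v\|^2=p$ exactly. For instance, when $p\equiv 5\pmod 8$ the sublattice is $\mathfrak c=(p,\zeta^{2^{n-1}}+u)$ in $O_{\mathbb K}=\Z[i]$ with $u^2\equiv -1\pmod p$, and the actual shortest vector corresponds to a two-squares representation $a^2+b^2=p$, embedded as $a+b\zeta^{2^{n-1}}$. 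So a genuine (albeit trivially small) lattice reduction is still needed; the shortest vector is not $p$.
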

Can we conclude from the above result that the \textit{average case} prime ideal SVP is easy? It depends how we define an average prime ideal lattice.
As prime ideals are rigid structures, changing distributions
gives us totally different complexity results.
If  prime ideals are selected
uniformly at random from the set of those prime ideals whose norms are bounded, then easy cases are rare.
Nevertheless our result does show that an average case of the prime
ideal SVP in power-of-two cyclotomic fields
is not hard,
if the prime ideals are selected
uniformly at random from the set of all prime ideals whose rational primes are less than some fixed bound.
See Subsection~\ref{averagesubs}
for details.

For general (non prime) ideals in $ \Z[x]/(x^{2^n}+1) $, we present an
algorithm to confirm that the hierarchy for the hardness of SVP also exists; that
is, we can solve SVP for a general ideal lattice by solving SVP in a $2^r$-dimensional
sublattice, for some positive integer $r$ related to the factorization of the ideal
(see Theorem \ref{mainthmg}).
Following Theorem~\ref{prop:p},  we show how to solve the SVP for ideals all
of whose prime factors lie in a certain class.
This is a special case of Theorem \ref{mainthmg}.

\begin{proposition}
	Let $ N = 2^n, $ where $ n $ is a positive integer.
	Let $ {\mathcal I} $ be an  ideal in the ring  $ \Z[x]/(x^{N}+1) $
	with prime factorization
	\[ {\mathcal I} = {\mathfrak p}_1  {\mathfrak p}_2 \cdots {\mathfrak p}_k. \]
	If each $\mathfrak{p}_i$ contains a prime integer $ \equiv  \pm 3 \pmod{8} $, the shortest vector
	in $ {\mathcal I} $  can be found in time $ poly(N, \log(\mathcal{N}({\mathcal I})) ) $.
\end{proposition}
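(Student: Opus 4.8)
The plan is to reduce the SVP for $\mathcal{I}$ to SVP in a low-dimensional sublattice, exactly as in the proof of Theorem~\ref{prop:p} but carried out for a product of primes. First I would recall the structural fact underlying Theorem~\ref{prop:p}: when a rational prime $p\equiv\pm 3\pmod 8$ sits below a prime ideal $\mathfrak{p}$ in $\Z[x]/(x^N+1)$, the factorization of $x^N+1$ over $\F_p$ has few irreducible factors, so the decomposition field of $\mathfrak{p}$ is a small subfield $\mathbb{K}$ (of degree $2$ in the relevant case, since $p\equiv\pm3\pmod 8$ forces $x^N+1$ to split into two irreducible quadratic-in-$x^{N/2}$-type factors — concretely $\mathfrak{p}\cap\mathbb{K}$ is a rank-$2$ lattice and the shortest vector has length $\sqrt{p}$). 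The key point I would exploit is that for each $i$, the sublattice $\mathfrak{p}_i\cap\mathbb{K}_i$ realizing the shortest vector of $\mathfrak{p}_i$ lives in a subfield whose lattice structure is controlled, and that the shortest vector there is (an algebraic integer realizing) a vector of length $\sqrt{p_i}$, where $p_i$ is the rational prime contained in $\mathfrak{p}_i$.

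Next I would assemble these local solutions. The ideal $\mathcal{I}=\prod_i\mathfrak{p}_i$ has norm $\mathcal{N}(\mathcal{I})=\prod_i p_i^{f_i}$ where $f_i$ is the residue degree; under the coefficient embedding, Minkowski's bound gives $\lambda_1(\mathcal{I})\le \sqrt{N}\cdot\mathcal{N}(\mathcal{I})^{1/N}$. I claim that the element $c=\prod_i c_i$, where $c_i$ is the short element of length $\sqrt{p_i}$ found inside $\mathfrak{p}_i$ (chosen compatibly so that $c_i\in\mathfrak{p}_j$ for all $j$ as well, or more simply by CRT-combining to land in $\mathcal{I}$), lies in $\mathcal{I}$ and has controlled length; the multiplicativity of norms together with the "length $=\sqrt{p_i}$" statement of Theorem~\ref{prop:p} forces this combined vector to be essentially optimal. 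To make this rigorous I would: (1) observe that Theorem~\ref{prop:p}'s hierarchy result (Appendices A, B) places all the $\mathfrak{p}_i$ in the $r=1$ class, so $\mathcal{I}$ falls under the scope of Theorem~\ref{mainthmg} with $r=1$; (2) invoke Theorem~\ref{mainthmg} to reduce SVP for $\mathcal{I}$ to SVP in a $2$-dimensional sublattice $\mathcal{I}\cap\mathbb{K}$ for a common degree-$2$ subfield $\mathbb{K}$ (here one uses that the decomposition fields of the $\mathfrak{p}_i$ are all contained in the same quadratic subfield, which is the chain-in-the-subgroup-lattice observation); (3) solve $2$-dimensional SVP exactly via Gauss/Lagrange reduction in $\mathrm{poly}(\log\mathcal{N}(\mathcal{I}))$ time; (4) verify via the Minkowski bound for the rank-$2$ sublattice that its shortest vector is also shortest in all of $\mathcal{I}$, using the determinant-comparison argument sketched in Subsection~\ref{ourres} relating $\lambda_1$ of the sublattice to $\lambda_1$ of $\mathcal{I}$.

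The main obstacle I anticipate is step (2)–(4): showing that a \emph{single} common low-rank subfield works for the whole product $\mathcal{I}$, and that descending to $\mathcal{I}\cap\mathbb{K}$ genuinely loses nothing. For a single prime ideal this is immediate from Galois theory (the decomposition group), but for a product of primes lying over possibly \emph{different} rational primes $p_i$, one must check that their decomposition fields — each a quadratic subfield of $\Q(\zeta_{2N})$ of the form $\Q(\sqrt{\pm 2})$ or $\Q(i)$ depending on $p_i\bmod 8$ — are in fact the \emph{same} subfield (or at worst generate a still-small subfield), which is exactly where the congruence $p_i\equiv\pm3\pmod 8$ is used uniformly. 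Assuming the hierarchy/chain structure established in the appendices, this is really a corollary of Theorem~\ref{mainthmg}, and the length and running-time bounds follow by combining the $\mathrm{poly}(N,\log p_i)$ cost of locating each $c_i$ (Theorem~\ref{prop:p}), the $\mathrm{poly}(\log\mathcal{N}(\mathcal{I}))$ cost of a single $2$-dimensional reduction, and the Minkowski determinant comparison; summing over $i$ keeps everything polynomial in $N$ and $\log\mathcal{N}(\mathcal{I})$. Thus the proposition is the announced special case $r=1$ of Theorem~\ref{mainthmg}, with the quantitative $\sqrt{p}$-type control inherited directly from Theorem~\ref{prop:p}.
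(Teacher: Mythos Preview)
Your overall strategy---recognize the proposition as the constant-$r$ special case of Theorem~\ref{mainthmg} and solve a fixed-dimensional SVP---is exactly the paper's approach (the paper simply says ``This is a special case of Theorem~\ref{mainthmg}''). However, your computation of $r$ is off. For $p\equiv 5\pmod 8$ (i.e.\ $p\equiv -3\pmod 8$) one has $p=2^2m+1$ with $m$ odd, so $A=2$ and $r_i=\min\{A-1,n\}=1$. But for $p\equiv 3\pmod 8$ one has $p=2^2m-1$ with $m$ odd, so $A=2$ and $r_i=\min\{A,n\}=2$ (for $n\ge 2$). Hence $r=\max r_i$ can be $2$, not $1$, and the sublattice $\mathcal{I}\cap O_{\mathbb K}$ with $\mathbb K=\Q(\zeta^{2^{n-r}})$ has dimension up to $4$, not $2$. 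This is already visible in the proof of Theorem~\ref{prop:p}, where the $p\equiv 3\pmod 8$ case is handled via a $4\times 4$ (not $2\times 2$) Hermite normal form.

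Relatedly, your worry that ``the decomposition fields \ldots\ are in fact the same subfield'' is misplaced in two ways. First, they are \emph{not} the same: for $p\equiv 5\pmod 8$ the decomposition field is $\Q(i)$, while for $p\equiv 3\pmod 8$ it is $\Q(\zeta_8-\zeta_8^{-1})=\Q(\sqrt{-2})$ (see Appendix~B). Second, Theorem~\ref{mainthmg} never requires them to coincide: it works with the field $\Q(\zeta^{2^{n-r}})$ in the central chain $\Q(i)\subset\Q(\zeta_8)\subset\cdots$, chosen large enough that the cyclic group $\langle\sigma_{2^{r+1}+1}\rangle$ stabilizes \emph{every} $\mathfrak p_i$. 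With $r=2$ this field is $\Q(\zeta_8)$, which contains both $\Q(i)$ and $\Q(\sqrt{-2})$. So the obstacle you anticipate dissolves once $r$ is set correctly.

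Finally, the first paragraph of your plan (find short $c_i\in\mathfrak p_i$, multiply or CRT-combine them) is unnecessary and would not directly give the shortest vector of $\mathcal I$ anyway; Theorem~\ref{mainthmg} shows $\lambda_1(\mathcal I)=\lambda_1(\mathcal I\cap O_{\mathbb K})$ via the orthogonal decomposition $\mathcal I=\bigoplus_k\zeta^k(\mathcal I\cap O_{\mathbb K})$, so a single SVP call in dimension at most $4$ suffices, and no factorization of $\mathcal I$ is needed.
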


We would like to stress that the algorithm
works by exploiting the multiplicative structure of ideals in the ring of
integers of a number field, without factoring the ideal. We regard this as the second contribution of this work, in addition to the algorithm for prime ideals.

Note that a decomposition field is a subfield of the number field. Our algorithm
can also be seen as a kind of subfield attack to solve the ideal Hermite-SVP
problem. Compared with the previous subfield attacks
\cite{Ber14,ABD16,CJL16,KF17}, the main differences are: the previous subfield
attacks use relative norm (or
trace) to map a lattice into some subfield, while we use the
intersection with the decomposition field; The approximation factor in the
previous attacks, such as \cite{ABD16}, will suffer during the lifting process,
while our lifting costs not so much; The previous attacks
\cite{ABD16,CJL16,KF17} work for NTRU with much big modulus, while the instances
amenable to our attack must satisfy the condition that the decomposition field is a proper
subfield of the number field.

We have to point out that it is still unknown how our result impacts the
security of cryptographic schemes. It does not break Ring-LWE, since the
security reduction is from the worst case ideal SVP to the average case
Ring-LWE, and it is one-way. As pointed out by \cite{Ber14}, Smart and
Vercauteren \cite{SV10} proposed an ideal lattice-based fully homomorphic
encryption scheme, which generated a prime ideal lattice as the public key. It
is enough to break the scheme by finding a short vector in the lattice. 
To improve the efficiency, they chose ideals of prime determinants, which are not
weak instances revealed by our algorithm.
Our paper provides a security justification for using such ideal lattices.
We should no doubt avoid the weak instances when we construct cryptographic schemes. In
addition, our result is a beneficial attempt to solve ideal SVP by exploiting
the algebraic structure, and it helps us understand better the hardness of
ideal SVP.

\subsection{Paper organization}

The remainder of the paper is organized as follows. In Section \ref{sec:math-prel},
we give some mathematical preliminaries. In Section \ref{sec:idealp} we prove a reduction of approx-SVP in the finite Galois extension of $\Q$.  Then in section 4 we present a reduction of SVP for prime ideal
lattices and then general ideal lattices in $\Z[\zeta_{2^{n+1}}]$.
Finally, a conclusion and some open problems are given in Section \ref{sec:con}.

\section{Mathematical preliminaries}\label{sec:math-prel}

\subsection{Lattices and some computational problems}
Lattices are discrete additive subgroups of $\R^N$. Any finite set of linearly independent vectors $b_1$, $b_2$, $\cdots$, $b_m\in \mathbb{R}^N$ generates a lattice: $$\L= \left\{\sum_{i=1}^{m}z_i b_i|\;z_i\in\mathbb{Z}\right\}.$$
Denote by $B$  the matrix  whose column vectors are the $b_i$'s.  We say $B$ is a basis (in matrix form) for $\L$; $m$ and $N$ are the rank and dimension of $\L$, respectively. Denote by  $\det(\L)$ the determinant of lattice $\L$, which is defined as the (co)volume of $\L$ in the real subspace spanned by $\L$. Note that if $m=N$,  the  determinant of $\L$ is exactly $|\det(B)|$.

The shortest vector problem (SVP), which refers to the problem of finding a shortest nonzero lattice vector in a given lattice,
is one of the most famous hard problems in lattice theory.  There are some variants of SVP that are very important for applications.
\begin{itemize}
	\item Approx-SVP: Given a lattice $\L$ and an approximation factor $\gamma\geq 1$, find
	a non-zero lattice vector of norm $\leq\gamma\cdot\lambda_1(\L)$, where  $\lambda_1(\mathcal{L})$ is the length of a shortest non-zero vector in $\L$. 
    \item Hermite-SVP: Given a rank-$N$ lattice $\L$ and an approximation factor $\gamma\geq 1$, find a non-zero lattice vector of norm $\leq\gamma\cdot\det(\L)^{\frac{1}{N}}$. Note that  Minkowski's theorem \cite{MGbook} tells us that
    $$\lambda_1(\mathcal{L}) \leq \frac{2}{\mathcal{V}_N^{1/N}}\cdot \det(\L)^{\frac{1}{N}}\leq \sqrt{N}\cdot \det(\L)^{\frac{1}{N}},$$
    where $\mathcal{V}_N$ is the volume of the $N$-dimensional ball with radius
    1.  Thus for any $\gamma\geq \sqrt{N}$,  Hermite-SVP is well-defined for all
    rank-$N$  lattices. 
    
    Since $\lambda_1(\mathcal{L})$ is usually hard to determine given a basis of $\L$, it might be very hard to verify a solution returned by an algorithm for Approx-SVP with some approximation factor. However, the solution to Hermite-SVP can be verified efficiently. Hence, 
    many algorithms, such as LLL  \cite{LLL} and BKZ \cite{BKZ}, are designed as polynomial-time
    Hermite-SVP algorithms for some exponential approximation factor. 

\end{itemize}

It is obvious that any algorithm that solves Approx-SVP with factor $\gamma$ can also solve Hermite-SVP with factor $\gamma\sqrt{N}$ by Minkowski's theorem. Furthermore, based on an idea
of Lenstra and Schnorr, Lov{\'a}sz showed that any algorithm solving Hermite-SVP with factor $\gamma$ can be used to solve Approx-SVP with factor $\gamma^2$ in polynomial time \cite{Lovasz86}. 
	
Moreover, a solution to Hermite-SVP with factor $\sqrt{N}$, that is, satisfying the Minkowski bound, is usually taken as a good enough approximation of some shortest vector in a "random" lattice. In addition, when choosing parameters for lattice-based cryptosystems in practice, such as in NewHope\cite{NewHope}, Crystals-Kyber \cite{Kyber}, and LAC \cite{LAC},  the time complexity of solving Hermite-SVP with some particular factor usually determines the concrete security of these cryptosystems. Therefore, the algorithm for Hermite-SVP is key to both solving Approx-SVP and analyzing the security of lattice-based cryptosystems.

The closest vector problem (CVP) is another famous hard problem in lattice theory.  This refers to the problem of finding a lattice vector that is closest to a given vector.

\subsection{Some basic algebraic number theory}

We will review some basic algebraic number theory in this section. More details can be found in \cite{Marcus18} or \cite{Neukirch92}
\subsubsection{Number fields}
An algebraic number $\zeta\in\mathbb{C}$ is any root of a nonzero polynomial $f(x)\in\Q[x]$ and its minimal polynomial is the unique monic irreducible $f(x)\in\Q[x]$ of minimal degree that has $\zeta$ as a root. An algebraic number is called an algebraic integer if its minimal polynomial lies in $\Z[x]$.

An algebraic number field is a finite field extension ${\mathbb{K}}$ of  $\Q$. Such a field can be obtained by adjoining a single algebraic integer $\zeta$ to $\Q$. That is, $\mathbb{K}=\Q(\zeta)$ for some algebraic integer $\zeta$. The degree $N$ of the minimal polynomial $f(x)$ of $\zeta$ is also the degree of ${\mathbb{K}}$ over $\Q$.

Denote by $O_{\mathbb{K}}$ the ring of algebraic integers in $\mathbb{ K }$. It is an integral domain and also a free $\Z$-module with rank $N$. 

For example, let $\zeta_{2^{n+1}}$ be a complex primitive
$2^{n+1}$-th root of unity, whose minimal polynomial is 
$f=x^{2^n}+1$. Then,  $\mathbb{K}=\Q(\zeta_{2^{n+1}})$ is the cyclotomic number field of order
$2^{n+1}$ with degree $2^n$.  Its ring of integers is  well known to be $\Z[\zeta_{2^{n+1}}]$.

\subsubsection{Embeddings}

A number field $\mathbb{ K }$ of degree $N$ over $\Q$ has exactly $N$ embeddings into $\C$.  Let $ \sigma_1, \sigma_2, \cdots,
\sigma_{s_1}$ be the real embeddings  from ${\mathbb{K}}$ to $\mathbb{R}$, and let $$\sigma_{s_1+1},\sigma_{s_1+2}, \cdots, \sigma_{s_1 +
	s_2}  ,$$ $$ \sigma_{s_1+s_2+1}= \overline{ \sigma_{s_1+1} },~~ \sigma_{s_1+s_2+2}= \overline{ \sigma_{s_1+2} }, ~\cdots, ~ \sigma_{s_1+2s_2}=\overline { \sigma_{s_1 +
		s_2} } $$    be the non-real embeddings  from ${\mathbb{K}}$ to $\mathbb{C}$, where $ \overline{\cdot} $ denotes
complex conjugation.

From these $\sigma_i$'s we can define the \textit{canonical embedding} $\Sigma_\mathbb{ K }$ from $\mathbb{ K }$ to $\C^{N}$:
$$\Sigma_\mathbb{K}: {\mathbb{K}} \rightarrow \C^N,~~a\mapsto (\sigma_1 (a), \sigma_2(a), \cdots, \sigma_{N}(a)).$$ 
It is known that the image of $\Sigma_\mathbb{K}$ falls into a subspace in $\C^{N}$, which is
isomorphic to $\R^N$ as an inner product space (see \cite{LPR10}).

Another important embedding from $\mathbb{ K }$ to $\R^{N}$ is the \textit{coefficient embedding,} which is most commonly used in cryptographic constructions.  This embedding depends on a choice of generator $\alpha$ for $\mathbb{ K }$: write $\mathbb{ K }= \mathbb{Q}(\alpha)$  and map $\beta = a_0+a_1\alpha+...+a_{N-1}\alpha^{N-1}$ to its coefficient vector, $C(\beta):=(a_0,a_1,...,a_{N-1})$.

If $\alpha$ may be chosen so that
$$ O_{\mathbb{K}}=\Z+\alpha \Z +\alpha^2 \Z+...+\alpha^{N-1} \Z $$
we say $ O_{\mathbb{K}}$ is  \textit{monogenic}.
In this case the coefficient embedding maps $ O_{\mathbb{K}} $ to $\Z^N$. Alternatively, via $O_\mathbb{K}\cong \Z[x]/(f(x)) $, where $f(x)$ is the minimal polynomial of $\alpha$, we may think of $C$ mapping a polynomial in $ \Z[x]/(f(x)) $ to its coefficient vector:

\[ C(a_0 + a_1 x + \cdots + a_{N-1 } x^{N-1}) = (a_0, a_1, \cdots, a_{N-1}).  \]

\subsubsection{Discriminants} 
If $\mathbb{K}\subset \mathbb{ L }$ are number fields,  the (relative) discriminant of a $\mathbb{K}$-basis $b_1, b_2, \ldots, b_N$ for $\mathbb{ L }$ is
defined by $$d_{\mathbb{ L }/\mathbb{ K }}(b_1, b_2, \ldots, b_N) = |\det(\sigma_ib_j)|^2,$$
where $\sigma_i$ varies over the $[\mathbb{ L }:\mathbb{ K }]$ embeddings $\mathbb{ L }\rightarrow \mathbb{ C }$ which fix all elements of $\mathbb{ K }$.   The discriminant $\disc({O_\mathbb{ L }/ O_\mathbb{ K }})$, also denoted by $\disc({\mathbb{ L }/\mathbb{ K }})$, is then the ideal of $O_\mathbb{ K }$ which is generated by the discriminants $d_{\mathbb{ L }/\mathbb{ K }}(b_1,b_2,\ldots,b_N)$ of all the $\mathbb{K}$-bases $b_1,b_2,\ldots,b_N$ of $\mathbb{ L }$ which are contained in $O_\mathbb{ L }$.

For any number field $\mathbb{ K }$, the (absolute) discriminant $\disc({\mathbb{ K }/\mathbb{ Q }})$ becomes the principal ideal generated by
$d(b_1,b_2,\ldots,b_N)$ for any basis $b_1,b_2,\ldots,b_N $ of the  
free $\Z$-module  $O_\mathbb{ K }$.  
In this case we just write $\disc(\mathbb{ K })$ to refer to this ideal or the unique positive integer that generates it.  In a sense made precise by the embeddings defined above, the discriminant gives a notion of the co-volume of a ring of integers in its fraction field. Specifically, the discriminant is just the square of this co-volume. 

\subsection{Ideal lattices}

	The ring of integers $ O_{\mathbb{K}} $ of $ {\mathbb{K}} $ is a free $\mathbb{Z}$-module, and any ideal $\mathcal{I}$ in $ O_{\mathbb{K}} $ is  a free $\mathbb{Z}$-submodule since $\mathbb{Z}$ is a principal ideal domain. Under the canonical embedding or the coefficient embedding, any such $\mathcal{I}$ is sent to a lattice in $ \R^{N} $.  We call this image the {\it ideal lattice} associated with $\mathcal{I}$, and we denote it also by $\mathcal{I}$.

	Under the canonical embedding $\Sigma_\mathbb{ K }$ from $\mathbb{ K }$ to
  $\C^{N}$, the co-volume (i.e. the volume of a fundamental domain) of an ideal
  lattice $\mathcal{I}$ is given by
  $N_\mathbb{K}(\mathcal{I})\sqrt{|\disc(\mathbb{ K })|}$, where $N_\mathbb{ K
  }(\mathcal{I})$ is the \textit{norm} of $\mathcal{I}$, defined as the
  cardinality of $O_\mathbb{ K }/\mathcal{I}$. Note that
  when we say the norm of a vector, it refers to the Euclidean norm rather than
  the algebra norm of an ideal.

Usually it is easier to use the canonical embedding in mathematical analysis,
and to use the coefficient embedding in cryptography. For example, under the coefficient embedding of $\Z[\zeta_{2^{n+1}}]$,
the lattice associated with the prime ideal $ \P_i=(p, f_i(\zeta_{2^{n+1}}))$
is generated by the coefficient  vectors of
the following polynomials (modulo $ x^N + 1 $)
\[  f_i, xf_i, \cdots, x^{N-1} f_i   
\ {\rm \ and\ }
p, px, \cdots, p x^{N-1},  \]
where $p$ is some rational prime, and $f_i$ is some irreducible factor of $x^{2^n}+1$ modulo $p$.
The minimum generating set should have only $ N $
vectors, which can be found by computing the
Hermite Normal Form.
	
	\subsubsection{Ideals in $\Z[\zeta_{2^{n+1}}]$}
	The cyclotomic field of order $ 2 N = 2^{n +1 } $ is widely used in cryptography.
	Its ring of integers is $\Z[\zeta_{2^{n+1}}]$, which is isomorphic to $ \Z[x]/(x^N + 1)  $.
	Its discriminant is $ 2^{n 2^n} $.
	
	Let $ p $ be a rational prime, and let \[ x^N + 1 = (f_1 f_2 \cdots f_g)^{e} \]
	be the prime factorization of $ x^N +1 $ in the polynomial ring
	$ \F_p[x] $. Then we have  $$ (p) = (\mathfrak{p}_1\mathfrak{p}_2 \cdots \mathfrak{p}_g)^{e}, $$
	where $ \mathfrak{p}_i = (p, f_i(\zeta_{2^{n+1}})) $ (here $f_i$ is any integer polynomial which projects to the $f_i$ in the above factorization).
	We say the prime ideal $ \mathfrak{p}_i  $ {\it lies over }
	the prime $ p $.
	If $ e$ is greater than $ 1 $,
	we say the prime $ p $ is {\it ramified} (in $\Z[\zeta_{2^{n+1}}]$); otherwise we say $p$ is {\it unramified.} One can verify that $ 2 $ is the only  ramified rational prime
	in the cyclotomic field of order $ 2N $, and that the
	prime ideal $ (2, \zeta_{2^{n+1}}+1) = (\zeta_{2^{n+1}}+1) $ lies above
	the ideal $ (2) $.

	We are therefore interested in the explicit factorization of the $2^{n+1}$-th cyclotomic polynomials, $x^{2^n}+1$, over $\F_p[x]$.  This is computed in \cite[Thm. 2.47 and Thm. 3.75]{GF1983} when $ p \equiv 1 \pmod 4$
	and in \cite{GF1996} when $ p \equiv 3 \pmod 4$.
	
	\begin{theorem}\label{factor1}
		Let $p \equiv 1 \pmod 4$, i.e. $p = 2^A\cdot m +1$, $A\geq 2$, $m$ odd. Denote by $U_k$ the
		set of all primitive $2^k$-th roots of unity modulo $ p $. We have
		\begin{itemize}
			\item If $n<A$, then $x^{2^n}+1$ is the product of $2^n$ irreducible linear factors over $\F_p$:
			$$x^{2^n}+1 = \prod_{u\in U_{n+1}}(x+u).$$
			\item If $n \geq A$, then  $x^{2^n}+1$ is the product of $2^{A-1}$ irreducible binomials over $\F_p$ of degree $2^{n-A+1}$:
			$$x^{2^n}+1 = \prod_{u\in U_{A}}(x^{2^{n-A+1}}+u).$$
		\end{itemize}
	\end{theorem}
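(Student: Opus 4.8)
The statement is classical; the plan is as follows. Over $\Fbar_p$ the roots of $x^{2^n}+1$ are exactly the primitive $2^{n+1}$-th roots of unity, and since $p$ is odd the polynomial is separable (it shares no root with its derivative $2^n x^{2^n-1}$), so it is a product of $2^n$ distinct monic irreducibles over $\F_p$. Everything then reduces to two inputs: which field $\F_{p^d}$ is the smallest one containing a primitive $2^{n+1}$-th root of unity, and one explicit substitution identity.

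First I would dispose of the case $n<A$. Then $2^{n+1}\mid 2^A\mid p-1$, so the cyclic group $\F_p^\times$ of order $p-1$ already contains all $\phi(2^{n+1})=2^n$ primitive $2^{n+1}$-th roots of unity; hence $x^{2^n}+1$ splits into the $2^n$ distinct linear factors $x-u$, $u\in U_{n+1}$. Since $n\ge 1$ we have $-u=u^{1+2^n}$ with $1+2^n$ odd and coprime to $2^{n+1}$, so $u\mapsto -u$ permutes $U_{n+1}$ and $\prod_{u\in U_{n+1}}(x-u)=\prod_{u\in U_{n+1}}(x+u)$, which gives the first assertion.

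The content is the case $n\ge A$. The key lemma is the $2$-adic computation $v_2(p^{2^j}-1)=A+j$ for all $j\ge 0$: writing $p=1+2^A m$ with $m$ odd, one squares to get $p^2=1+2^{A+1}\bigl(m+2^{A-1}m^2\bigr)$, and $A\ge 2$ forces $m+2^{A-1}m^2$ to be odd, so the $2$-adic valuation of $p^{2^j}-1$ increases by exactly one at each squaring, and the lemma follows by induction. Since $|(\Z/2^{n+1}\Z)^\times|=2^n$, the multiplicative order of $p$ modulo $2^{n+1}$ is a power of $2$, hence equals $2^{\,n+1-A}=:d$, because $p^{2^j}\equiv 1\pmod{2^{n+1}}$ exactly when $A+j\ge n+1$. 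Consequently any primitive $2^{n+1}$-th root of unity $\alpha$ generates $\F_p(\alpha)=\F_{p^d}$ (as $\alpha\in\F_{p^e}$ iff $2^{n+1}\mid p^e-1$ iff $d\mid e$), so every monic irreducible factor of $x^{2^n}+1$ has degree $d=2^{n-A+1}$ and there are exactly $2^n/d=2^{A-1}$ of them. To pin down the factors, substitute $y=x^{2^{n-A+1}}$, so that $y^{2^{A-1}}=x^{2^n}$; the roots of $y^{2^{A-1}}+1$ are the primitive $2^A$-th roots of unity, hence $y^{2^{A-1}}+1=\prod_{v\in U_A}(y-v)$, and since $-U_A=U_A$ (again $1+2^{A-1}$ is odd because $A\ge 2$) we obtain $x^{2^n}+1=\prod_{u\in U_A}\bigl(x^{2^{n-A+1}}+u\bigr)$. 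This writes $x^{2^n}+1$ as a product of $2^{A-1}$ monic polynomials; since it has exactly $2^{A-1}$ monic irreducible factors, no binomial $x^{2^{n-A+1}}+u$ can be reducible (else the total irreducible count would exceed $2^{A-1}$), so each is irreducible and the displayed factorization is the claimed one.

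I expect the main obstacle to be the order computation $\text{ord}_{2^{n+1}}(p)=2^{n+1-A}$, i.e.\ the lifting-the-exponent step: it is precisely here that the hypothesis $p\equiv 1\pmod 4$ (equivalently $A\ge 2$) is used, since for $p\equiv 3\pmod 4$ the $2$-adic behaviour of $p^k-1$, and hence the shape of the factorization, is genuinely different, which is why that case is quoted from a separate source. The remaining ingredients (separability, the cyclotomic count, the substitution identity, and the unique-factorization counting argument) are routine.
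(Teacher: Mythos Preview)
Your argument is correct in every step: the separability, the linear case via $2^{n+1}\mid p-1$ and the sign permutation $u\mapsto -u$ on $U_{n+1}$, the lifting-the-exponent computation $v_2(p^{2^j}-1)=A+j$ (which is exactly where $A\ge 2$ is needed), the resulting order $\mathrm{ord}_{2^{n+1}}(p)=2^{\,n+1-A}$, the substitution $y=x^{2^{n-A+1}}$, and the final counting argument forcing each binomial to be irreducible all check out.

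There is, however, nothing to compare against: the paper does not prove this theorem at all. It is quoted verbatim from the finite-fields literature (Lidl--Niederreiter, Theorems~2.47 and~3.75) as background material, with the companion result for $p\equiv 3\pmod 4$ likewise imported from a separate reference. So your write-up is not an alternative to the paper's proof but rather a self-contained justification of a result the authors chose to cite. That said, your proof is the standard one and is essentially what one finds in the cited source, so in that sense the approaches coincide.
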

	
	\begin{theorem}\label{factor3}
		Let $p \equiv 3 \pmod 4$, i.e. $p = 2^A\cdot m -1$, $A\geq 2$, $m$ odd. Denote by $D_s(x,a)$ the Dickson polynomials
		$$\sum_{i=0}^{\lfloor \frac{s}{2}\rfloor} \frac{s}{s-i}{{s-i}\choose{i}}(-a)^ix^{s-2i}$$ over $\F_p$. For $n\geq 2$, we have
		\begin{itemize}
			\item If $n<A$, then $x^{2^n}+1$ is the product of $2^{n-1}$ irreducible trinomials over $\F_p$:
			$$x^{2^n}+1 = \prod_{\gamma\in\Gamma}(x^2+\gamma x+1),$$
			where $\Gamma$ is the set of all roots of $D_{2^{n-1}}(x,1)$.
			\item If $n \geq A$, then  $x^{2^n}+1$ is the product of $2^{A-1}$ irreducible trinomials over $\F_p$ of degree $2^{n-A+1}$:
			$$x^{2^n}+1 = \prod_{\delta\in \Delta}(x^{2^{n-A+1}}+\delta x^{2^{n-A}}-1),$$
			where $\Delta$ is the set of all roots of $D_{2^{A-1}}(x,-1)$.
		\end{itemize}
	\end{theorem}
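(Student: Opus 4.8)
\medskip
\noindent\textbf{Proof idea.} The plan is to reduce the statement to two standard ingredients: the $2$-adic arithmetic of the unit group $(\Z/2^{n+1}\Z)^{\ast}$, and the functional equation $D_s(y+a/y,a)=y^{s}+(a/y)^{s}$ for Dickson polynomials, which one reads off directly from the displayed formula for $D_s$. Since $p$ is odd, $x^{2^{n}}+1=\Phi_{2^{n+1}}(x)$ is separable over $\F_p$, and the classical description of cyclotomic factorisation says it is the product of $2^{n}/d$ distinct monic irreducibles of common degree $d:=\mathrm{ord}_{2^{n+1}}(p)$, each of the form $\prod_{j=0}^{d-1}(x-\zeta^{p^{j}})$ as $\zeta$ runs over representatives of the $\langle p\rangle$-orbits on the set $Z$ of primitive $2^{n+1}$-th roots of unity in $\overline{\F}_p$. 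So it suffices to (a) compute $d$, and (b) identify the shape of one such factor from its root orbit. For (a): writing $p=2^{A}m-1$ with $m$ odd gives $v_2(p-1)=1$ and $v_2(p+1)=A$, so by lifting the exponent $v_2(p^{2^{j}}-1)=A+j$ for $j\ge 1$; since $(\Z/2^{n+1}\Z)^{\ast}$ is a $2$-group, $d$ is a power of two, hence $d=2$ when $n<A$ (then $2^{n+1}\mid p+1$, so $p\equiv-1$) and $d=2^{\,n-A+1}$ when $n\ge A$. This already yields the stated number and degree of the factors in both cases.

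For (b) when $n<A$: here $\zeta^{p}=\zeta^{-1}$, so each factor is $(x-\zeta)(x-\zeta^{-1})=x^{2}+\gamma x+1$ with $\gamma=-(\zeta+\zeta^{-1})\in\F_p$. Writing $\gamma=(-\zeta)+(-\zeta)^{-1}$ and applying the functional equation with $a=1$, $y=-\zeta$, $s=2^{n-1}$ gives $D_{2^{n-1}}(\gamma,1)=\zeta^{2^{n-1}}+\zeta^{-2^{n-1}}$, which vanishes because $\zeta^{2^{n-1}}$ is a primitive $4$-th root of unity. Thus each $\gamma$ is a root of $D_{2^{n-1}}(x,1)$; conversely the $2^{n-1}$ inversion-orbits on $Z$ produce $2^{n-1}$ \emph{distinct} values $\gamma$ (since $\gamma$ determines the monic quadratic $x^{2}+\gamma x+1$, hence its root set $\{\zeta,\zeta^{-1}\}$), so they exhaust the $2^{n-1}=\deg D_{2^{n-1}}(x,1)$ roots, which also shows $D_{2^{n-1}}(x,1)$ is separable.

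For (b) when $n\ge A$: the crux is that the orbit $\zeta^{\langle p\rangle}$ is a union of two cosets of the group $\mu_{2^{n-A}}$ of $2^{n-A}$-th roots of unity. From $v_2(p^{2}-1)=A+1$ one checks that $\langle p^{2}\rangle$ is precisely the subgroup of $(\Z/2^{n+1}\Z)^{\ast}$ of residues $\equiv1\pmod{2^{A+1}}$, so $\zeta^{\langle p^{2}\rangle}=\zeta\,\mu_{2^{n-A}}$ and $\zeta^{\langle p\rangle}=\zeta\,\mu_{2^{n-A}}\sqcup\zeta^{p}\mu_{2^{n-A}}$. Hence the factor equals $(x^{2^{n-A}}-\xi)(x^{2^{n-A}}-\xi^{p})$ with $\xi:=\zeta^{2^{n-A}}$ a primitive $2^{A+1}$-th root of unity; expanding, and using $\xi^{p+1}=\zeta^{2^{n}m}=-1$ together with $\xi^{p}=-\xi^{-1}$ (which follows from $p\equiv 2^{A}-1\pmod{2^{A+1}}$), the factor becomes $x^{2^{n-A+1}}+\delta\,x^{2^{n-A}}-1$ with $\delta=\xi^{-1}-\xi$. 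Taking $y=\xi^{-1}$ in the functional equation with $a=-1$, $s=2^{A-1}$ gives $D_{2^{A-1}}(\delta,-1)=\xi^{-2^{A-1}}+\xi^{2^{A-1}}=0$, again because $\xi^{2^{A-1}}$ is a primitive $4$-th root of unity; and since the assignment sending a $\langle p\rangle$-orbit to $\{\xi,\xi^{p}\}$, and thence to $\delta$, is injective, the resulting $2^{A-1}$ values $\delta$ are exactly the roots of $D_{2^{A-1}}(x,-1)$.

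The step I expect to be most delicate is pinning down $\langle p\rangle$ and $\langle p^{2}\rangle$ inside $(\Z/2^{n+1}\Z)^{\ast}$ precisely enough to obtain the ``union of two cosets'' statement, together with the two counting/separability arguments that upgrade ``$\gamma$ (resp.\ $\delta$) is a root of the Dickson polynomial'' to ``the $\gamma$'s (resp.\ $\delta$'s) are \emph{all} of its roots''. None of this is deep, but it requires careful bookkeeping with $2$-adic valuations and a sanity check in the boundary case $n=A$, where $\mu_{2^{n-A}}$ is trivial and the ``two cosets'' degenerate to $\{\zeta\}$ and $\{\zeta^{p}\}$. (For $n=1$, excluded in the statement, the case $n<A$ collapses to the familiar irreducibility of $x^{2}+1$ over $\F_p$ for $p\equiv3\pmod 4$.)
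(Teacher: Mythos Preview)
Your argument is correct. Note, however, that the paper does not prove this theorem at all: it is stated as a preliminary result and attributed to the literature (the sentence preceding Theorems~\ref{factor1} and~\ref{factor3} cites \cite{GF1983} for the $p\equiv 1\pmod 4$ case and \cite{GF1996} for the $p\equiv 3\pmod 4$ case). So there is no ``paper's own proof'' to compare against.

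That said, your proof is sound and self-contained. The two ingredients you isolate---the computation of $d=\mathrm{ord}_{2^{n+1}}(p)$ via $v_2(p^{2^j}-1)=A+j$ for $j\ge 1$, and the Dickson functional equation $D_s(y+a/y,a)=y^s+(a/y)^s$---do all the work. The identification of $\langle p^2\rangle$ with the subgroup $\{x\equiv 1\pmod{2^{A+1}}\}$ is justified exactly as you say: $p^2$ lies in this cyclic subgroup and has full order $2^{n-A}$ there, so generates it; this is what makes the orbit $\zeta^{\langle p\rangle}$ split as $\zeta\mu_{2^{n-A}}\sqcup\zeta^p\mu_{2^{n-A}}$ and yields the trinomial shape. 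The counting arguments (that the $2^{n-1}$ values $\gamma$, respectively the $2^{A-1}$ values $\delta$, exhaust the roots of the relevant Dickson polynomial) are correct because the linear coefficient of a monic quadratic in $x^{2^{n-A}}$ with constant term $\pm 1$ determines its root set. Your remark about the boundary case $n=A$ is apt and checks out: there $\mu_{2^{n-A}}=\{1\}$ and the argument degenerates to the usual quadratic with conjugate roots $\xi,\xi^p=-\xi^{-1}$.
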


	\section{Solving Hermite-SVP for prime ideal lattices in a Galois extension}\label{sec:idealp}

In the following, we will consider solving Hermite-SVP for prime ideals of $O_\mathbb{L}$ when $\mathbb{L}$ is a finite Galois extension of $\mathbb{Q}$.  

A prime ideal $\mathfrak{p}$ in $O_\mathbb{L}$ contains a rational prime $p$, and therefore occurs as one of the prime ideals in the factorization 
	$$pO_\mathbb{L}=(\mathfrak{p}_1 \mathfrak{p}_2 \cdots \mathfrak{p}_g)^e.$$ 
Without loss of generality, we assume $\mathfrak{p}_1=\mathfrak{p}$.
	
	To find a short vector of $\mathfrak{p}_1$, we try to find a short vector in the sublattice given by the
	intersection of $\mathfrak{p}_1$ with some intermediate field between $\mathbb{Q}$ and
	$\mathbb{L}$.  Since this sublattice has smaller rank, this may lead to a more efficient algorithm than working in  $\mathbb{L}$ directly.
	
	More precisely, let $G$ be the Galois group of $\mathbb{L}$ over $\mathbb{Q}$. Recall the decomposition group, $D$, and decomposition field, $\mathbb{K}$, for  $\mathfrak{p}_1$:
	$$ D := \{\sigma\in G: \sigma(\mathfrak{p}_1)= \mathfrak{p}_1\},$$
	$$\mathbb{K}:=\{x\in\mathbb{L}:\forall \sigma\in D, \sigma(x) = x\}.$$
	Let $O_{\mathbb{K}}$ be the algebraic integer ring of $\mathbb{K}$. It is well known that the degree of $\mathbb{K}$ over $\mathbb{Q}$ is $g$ (see \cite[Thm. 28]{Marcus18}). This is our desired intermediate field, and we have the following theorem.
	\begin{theorem}\label{general}  Suppose $\mathbb{L}/\mathbb{Q}$ is a finite Galois extension with degree $N$, and suppose $\mathfrak{p}$ is a prime ideal of $O_\mathbb{L}$ lying over an unramified rational prime $p$ such that $pO_\mathbb{ L }$ has $g$ distinct prime ideal factors in $O_\mathbb{ L }$.  If $\mathbb{K}$ is the decomposition field of $\mathfrak{p}$, then a solution to Hermite-SVP with factor $\gamma$ in the sublattice $\mathfrak{c}=\mathfrak{p}\cap O_{\mathbb{K}}$ under the canonical embedding of $\mathbb{K}$ will also be  a solution to Hermite-SVP in $\mathfrak{p}$ with factor $\frac{\sqrt{N/g}}{N_{\mathbb{ K }}(\disc(\mathbb{ L }/\mathbb{ K }))^{1/(2N)}}\cdot \gamma$ ($\leq \sqrt{\frac{N}{g}}\cdot \gamma$) under the canonical embedding of $\mathbb{L}$. 
	
	In particular, when $\gamma=\sqrt{g}$, a vector in the sublattice $\mathfrak{c}$ satisfying the Minkowski bound will produce a vector in the lattice $\mathfrak{p}$ satisfying the Minkowski bound.
		 \end{theorem}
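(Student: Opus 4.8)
The plan is to transport the Hermite-SVP solution from the smaller lattice $\mathfrak{c}=\mathfrak{p}\cap O_\mathbb{K}$ straight back into $\mathfrak{p}$: since $\mathfrak{c}\subseteq\mathfrak{p}$, any nonzero $v\in\mathfrak{c}$ is already a nonzero vector of $\mathfrak{p}$, and all that is needed is to bound the Hermite factor that it achieves there. That factor changes for exactly two reasons — the distortion between the canonical embeddings $\Sigma_\mathbb{K}$ and $\Sigma_\mathbb{L}$ when restricted to elements of $\mathbb{K}$, and the ratio between $\det(\mathfrak{c})^{1/g}$ and $\det(\mathfrak{p})^{1/N}$ — and both are controlled by standard facts of algebraic number theory.

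First I would record the arithmetic of $\mathfrak{c}$. It contains $p$, hence is a nonzero prime ideal of $O_\mathbb{K}$ of finite index, so it spans a full-rank sublattice and ``Hermite-SVP in the rank-$g$ lattice $\mathfrak{c}$'' is well posed. Because $\mathbb{K}$ is the decomposition field of $\mathfrak{p}$, the prime $p$ splits completely in $\mathbb{K}$ (see \cite{Marcus18}); thus $\mathfrak{c}$ is unramified over $p$ with residue degree $1$, so $N_\mathbb{K}(\mathfrak{c})=p$, and $\mathfrak{p}$ is the unique prime of $O_\mathbb{L}$ above $\mathfrak{c}$. On the other side, $p$ is unramified with exactly $g$ primes above it in the Galois extension $\mathbb{L}$ of degree $N$, so all of them share the common residue degree $N/g$ and $N_\mathbb{L}(\mathfrak{p})=p^{N/g}$.

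Next I would compare lengths and determinants. Each of the $g$ embeddings $\mathbb{K}\hookrightarrow\mathbb{C}$ is the restriction of exactly $[\mathbb{L}:\mathbb{K}]=N/g$ of the $N$ embeddings $\mathbb{L}\hookrightarrow\mathbb{C}$, so summing squared absolute values gives $\|\Sigma_\mathbb{L}(x)\|^2=(N/g)\|\Sigma_\mathbb{K}(x)\|^2$ for every $x\in\mathbb{K}$. For the determinants I would use the co-volume formula $\det(\mathcal{I})=N(\mathcal{I})\sqrt{|\disc|}$ recalled in Section~\ref{sec:math-prel}, getting $\det(\mathfrak{c})=p\,\sqrt{|\disc(\mathbb{K})|}$ and $\det(\mathfrak{p})=p^{N/g}\sqrt{|\disc(\mathbb{L})|}$. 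The tower (conductor--discriminant) formula $|\disc(\mathbb{L})|=|\disc(\mathbb{K})|^{N/g}\cdot N_\mathbb{K}(\disc(\mathbb{L}/\mathbb{K}))$ (see \cite{Neukirch92}) then gives
\[ \det(\mathfrak{p})^{1/N}=p^{1/g}\,|\disc(\mathbb{K})|^{1/(2g)}\,N_\mathbb{K}(\disc(\mathbb{L}/\mathbb{K}))^{1/(2N)}=\det(\mathfrak{c})^{1/g}\cdot N_\mathbb{K}(\disc(\mathbb{L}/\mathbb{K}))^{1/(2N)}. \]

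Finally I would assemble the pieces: if $v\in\mathfrak{c}$ is nonzero with $\|\Sigma_\mathbb{K}(v)\|\le\gamma\,\det(\mathfrak{c})^{1/g}$, then $v$ is a nonzero vector of $\mathfrak{p}$ and
\[ \|\Sigma_\mathbb{L}(v)\|=\sqrt{N/g}\;\|\Sigma_\mathbb{K}(v)\|\le\sqrt{N/g}\,\gamma\,\det(\mathfrak{c})^{1/g}=\frac{\sqrt{N/g}}{N_\mathbb{K}(\disc(\mathbb{L}/\mathbb{K}))^{1/(2N)}}\;\gamma\;\det(\mathfrak{p})^{1/N}, \]
which is exactly the claimed factor; since $N_\mathbb{K}(\disc(\mathbb{L}/\mathbb{K}))$ is a positive integer this factor is at most $\sqrt{N/g}$, and substituting $\gamma=\sqrt{g}$ turns it into at most $\sqrt{N}$, i.e. $v$ meets the Minkowski bound of the rank-$N$ lattice $\mathfrak{p}$. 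I expect the only delicate point to be the bookkeeping in the third paragraph: getting the residue degrees right so that $N_\mathbb{K}(\mathfrak{c})=p$ and $N_\mathbb{L}(\mathfrak{p})=p^{N/g}$, invoking the discriminant tower formula, and using one and the same normalization of the canonical embedding for $\mathbb{K}$ and for $\mathbb{L}$ so that the powers of $|\disc(\mathbb{K})|$ cancel as shown.
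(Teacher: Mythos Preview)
Your proof is correct and follows essentially the same approach as the paper: both compute the $\sqrt{N/g}$ length distortion from the fact that each embedding of $\mathbb{K}$ extends to $N/g$ embeddings of $\mathbb{L}$, use $N_\mathbb{K}(\mathfrak{c})=p$ and $N_\mathbb{L}(\mathfrak{p})=p^{N/g}$ together with the co-volume formula, and then apply the discriminant tower formula $|\disc(\mathbb{L})|=|\disc(\mathbb{K})|^{N/g}\,N_\mathbb{K}(\disc(\mathbb{L}/\mathbb{K}))$ to relate $\det(\mathfrak{c})^{1/g}$ to $\det(\mathfrak{p})^{1/N}$. Your write-up is in fact a bit more explicit about why $N_\mathbb{K}(\mathfrak{c})=p$ (via complete splitting in the decomposition field), where the paper simply cites \cite[Thm.~29]{Marcus18}.
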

	
	\begin{proof}

			 Consider the following diagram
	\[   
	\begin{tikzcd}
	\mathfrak{p} \arrow[r,symbol=\subset]        & O_\mathbb{L} \arrow[r,symbol=\subset]                    & \mathbb{L}  \arrow[r, "\Sigma_\mathbb{L}"]                   & {\C^{N}} \\
	\mathfrak{c} \arrow[u, no head] \arrow[r,symbol=\subset]   & O_\mathbb{K} \arrow[u, no head] \arrow[r,symbol=\subset] & \mathbb{K} \arrow[r, "\Sigma_\mathbb{K}"] \arrow[u, no head] & \C^{g} \arrow[u, "\beta"]      \\
	(p)  \arrow[u, no head]  \arrow[r,symbol=\subset]         & \Z \arrow[u, no head] \arrow[r,symbol=\subset]           & \Q  \arrow[u, no head] \arrow[r,symbol=\subset]          & \C \arrow[u]                  
	\end{tikzcd}
	\]
	Here $\beta$ is chosen to be the linear map making the diagram commute.

	Note that every embedding of $\mathbb{ K }$ in $\C$ can be extended to exactly $\frac{N}{g}$ embeddings of $\mathbb{L }$ in $\C$ \cite[Thm. 50]{Marcus18}; thus $\beta$ is (up to permutation) just the linear embedding given by repeating each coordinate $N/g$ times. 
Thus for any $v\in\C^g$ we have \begin{equation}\label{eq1}
	\|\beta(v)\|=\sqrt{\frac{N}{g}}~\cdot~\|v\|.
	\end{equation}
	
	Note that the  norm of $\mathfrak{c}$ is exactly $p$ \cite[Thm. 29]{Marcus18}, so that the determinant of $\mathfrak{c}$ is $p\sqrt{|\disc(\mathbb{ K })|}$.  Thus, under the canonical embedding of $O_\mathbb{K}$ into $\mathbb{C}^g$, any solution $v_0 \in \mathfrak{c}$ to Hermite-SVP with  factor $\gamma$ satisfies
	$$ \|v_0\|\leq  \gamma\cdot p^{\frac{1}{g}} |\disc(\mathbb{K})|^{\frac{1}{2 g}}.$$

	 By Equation (\ref{eq1}) above and the fact that $\disc(\mathbb{ L })=\disc(\mathbb{ K })^{N/g}N_{\mathbb{ K }}(\disc(\mathbb{ L }/\mathbb{ K }))$ \cite[Corallary (2.10), pp. 202]{Neukirch92}, we therefore have 
	 \begin{align*}
	 	\|\beta(v_0)\|&\leq	\gamma \cdot \sqrt{\frac{N}{g}} p^{\frac{1}{g}} |\disc(\mathbb{K})|^{\frac{1}{2 g}}\\
	 	&=\gamma \cdot \frac{\sqrt{N/g}}{N_{\mathbb{ K }}(\disc(\mathbb{ L }/\mathbb{ K }))^{1/(2N)}} p^{\frac{1}{g}} |\disc(\mathbb{L})|^{\frac{1}{2 N}}\\
	 	&= \gamma \cdot \frac{\sqrt{N/g}}{N_{\mathbb{ K }}(\disc(\mathbb{ L }/\mathbb{ K }))^{1/(2N)}} (p^{\frac{N}{g}}\sqrt{ |\disc(\mathbb{L})|})^{\frac{1}{ N}}
	 \end{align*}
	 
	Note that the norm of $\mathfrak{p}$ is $p^{\frac{N}{g}}$, and thus $p^{\frac{N}{g}}\sqrt{ |\disc(\mathbb{L})|}$ is exactly the determinant of the ideal lattice $\mathfrak{p}$ under the canonical embedding of $\mathbb{L}$. Hence $v_0$ is also a solution to Hermite-SVP with factor $\frac{\sqrt{N/g}}{N_{\mathbb{ K }}(\disc(\mathbb{ L }/\mathbb{ K }))^{1/(2N)}}\cdot \gamma$.
	
	Note that $N_{\mathbb{ K }}(\disc(\mathbb{ L }/\mathbb{ K }))$ is a positive integer.  Thus $$\frac{\sqrt{N/g}}{N_{\mathbb{ K }}(\disc(\mathbb{ L }/\mathbb{ K }))^{1/(2N)}} \leq  \sqrt{\frac{N}{g}}.$$ In particular, when $\gamma=\sqrt{g}$, $\frac{\sqrt{N/g}}{N_{\mathbb{ K }}(\disc(\mathbb{ L }/\mathbb{ K }))^{1/(2N)}}\cdot\gamma \leq \sqrt{N}$ still holds. The theorem follows.\hfill $\square$
	
	\end{proof}

  \begin{remark}
    To design an algorithm from the theorem, we need to calculate the
    decomposition field from a prime ideal.
    In general this is not an easy problem. Fortunately,
    for power-of-two or prime order cyclotomic fields,
    the subfield structures have been worked out in the literature.
   Another technical problem is to compute  
    a basis for $\mathfrak{c}=\mathfrak{p}\cap O_{\mathbb{K}}$.
    This can be solved
    if we know a $ \Q $-basis of $ \mathbb{K}$.
  \end{remark}

  \begin{remark}
   How many prime ideals
  are vulnerable to this attack? In other words, given an irreducible
  polynomial over $ \Z $, how does its factoring pattern change
  over $ \F_p $
  as $ p $ varies? This is a central topic of
  class field theory when the Galois group is solvable.
  In the general case, it has been studied in  
  the famous Langlands program, where many challenging problems remain.
  The answer is well known for number fields popular in
  lattice based cryptography.
  There
  exists a set of rational primes, of positive density with non-trivial decomposition group, such that for any $p$ in
  this set, the decomposition fields of the prime ideals lying above $p$ are never
  the whole field $  \mathbb{L}$.
In this case, $\mathfrak{p}\cap O_{\mathbb{K}}$ has
	rank no more than half that of $\mathfrak{p}$, resulting in a much easier
  SVP problem.
  \end{remark}

\section{Solving SVP for ideal lattices in $\Z[\zeta_{2^{n+1}}]$}
	
In the following, we use the above idea to solve SVP for ideal lattices in $\Z[\zeta_{2^{n+1}}]$,  the ring of integers in the cyclotomic field $\Q(\zeta_{2^{n+1}})$, a field which is widely used in lattice-based cryptography. 
  The decomposition field of any prime ideal is either equal to, or a degree-$2$ subfield of, one of the following
    \[ \Q[i]\subset \Q[\zeta_8]\subset \cdots \subset \Q[\zeta_{2^n}] \subset  \Q[\zeta_{2^{ n+1 }}].  \]
 The subfields in this chain are convenient because they are monogenic and their integer rings have $\Z$-bases (powers of $\zeta_{2^{ n+1 }}$) that are mutually compatible and orthogonal under the canonical embedding.  This results in a hierarchy of complexity of prime ideal SVP problems.
  Furthermore, for a non-prime ideal $\mathcal{I}$, we can approximate the shortest
  vectors of $\mathcal{I}$ by finding short vectors in $ \mathcal{I}\cap O_{\mathbb{K}} $, where $ \mathbb{K} $ is the smallest field in the above chain containing all the  
 decomposition fields of the prime factors of $\mathcal{I}$.
  This allows us to find short vectors for
   many non-prime ideals. 
In contrast to the approximation result we achieved in the general setting of Theorem \ref{general}, an \textit{exact} SVP solution is possible in power-of-two cyclotomic fields.
We will first prove a reduction for SVP for \textit{prime} ideal lattices in $\Z[\zeta_{2^{n+1}}]$, and then we will prove a reduction for general ideals. We would like to point out that in the case of a general ideal lattice $\mathcal{I}$, we do not need to know the prime factorization of $\mathcal{I}$ to run our algorithm.

\subsection{Solving SVP for  prime ideal lattices in $\Z[\zeta_{2^{n+1}}]$} 

For simplicity we let $\zeta = \zeta_{2^{n+1}}$.  In the sequel we say goodbye to the canonical embedding and adopt the coefficient embedding $C$:
$$\mathbb{Q}( \zeta)\rightarrow \mathbb{R}^{2^n},\hspace{5mm}\sum_{i=0}^{2^n-1}a_i\zeta^i\mapsto (a_0,~a_1,...,a_{2^n-1}).$$
The coefficient embedding is widely used in cryptographic constructions.  For
power-of-two cyclotomic fields, the
two embeddings are related by  scaled-rotations, since for any $v \in \Z[\zeta_{2^{n+1}}]$ it is easy to see that 
$$\|\Sigma_\mathbb{ L }(v)\|=\sqrt{2^n}\|C(v)\|.$$   
Hence, the shortest vector under the coefficient embedding of $\mathbb{Q}( \zeta)$ must be the shortest under the canonical embedding.

	The prime $2$ is the unique ramified prime in $\mathbb{Q}(\zeta)$, and the prime ideal lying over $( 2 )$ is $ (2, \zeta+1) = (\zeta+1) $. Hence it is easy to find the shortest vector in the ideal lattice $(\zeta+1) $, and its length is  $ \sqrt{2} $.
	
	Below we consider a prime ideal lying over an odd prime and show that there
	is a hierarchy for the hardness of solving SVP for prime ideal lattices in
	$\Z[\zeta]$. Roughly speaking, we can classify all the prime ideal lattices into
	$n$ classes labeled with $1,2, \cdots, n$,
	depending on the congruence class of $ p \pmod{2^{ n+1 }}$, and for a prime ideal lattice in
	the $r$-th class, we can always find its shortest vector by solving SVP in a
	$2^r$-dimensional lattice. More precisely, we have:

	\begin{theorem}\label{mainthm}
		For any prime ideal $\mathfrak{p}= (p, f(\zeta))$ in $\Z[\zeta]$, where $p$ is an odd prime and $f(x)$ is some irreducible factor of $x^{2^n}+1$ in
		$\F_p[x]$.  Write
		$$p = \left\{
		\begin{array}{ll}
		2^A\cdot m + 1, & \hbox{if $p\equiv 1 \pmod 4$;} \\
		2^A\cdot m - 1, & \hbox{if $p\equiv 3 \pmod 4$,}
		\end{array}
		\right.
		$$
		for some odd $m$ and $A\geq 2$,  and let
		$$r = \left\{
		\begin{array}{ll}
		\min \{A-1,n\}, & \hbox{if $p\equiv 1 \pmod 4$;} \\
		\min\{A,n\}, & \hbox{if $p\equiv 3 \pmod 4$.}
		\end{array}
		\right.
		$$
		Then given an oracle that can solve SVP for $2^r$-dimensional lattices,
		a shortest nonzero vector in $\mathfrak{p}$ can be found in $\text{poly}(2^n, \log_2p)$ time with the coefficient embedding.
	\end{theorem}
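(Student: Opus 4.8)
The plan is to reduce exact SVP in the rank-$2^n$ lattice $\mathfrak{p}$ to exact SVP in a rank-$2^r$ sublattice living inside the subring $O_{\mathbb{K}'}=\Z[\zeta^{2^{n-r}}]$, where $\mathbb{K}'=\Q(\zeta_{2^{r+1}})$ is the $(r{+}1)$-st field in the chain $\Q[i]\subset\Q[\zeta_8]\subset\cdots\subset\Q[\zeta_{2^{n+1}}]$, and $r$ is as in the statement. Write $N=2^n$, $m=2^r$, so that $\zeta_{2^{r+1}}=\zeta^{N/m}$ and $O_{\mathbb{K}'}\cong\Z[y]/(y^{m}+1)$ via $y=\zeta^{N/m}$. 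Note that Theorem~\ref{general} by itself only gives a Hermite-SVP reduction with a $\sqrt{N/g}$ loss; obtaining an \emph{exact} SVP solution requires the extra rigidity of power-of-two cyclotomic fields, which enters through an orthogonal decomposition below.

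\emph{Step 1 (locating the sublattice).} After discarding the trivial case $r=n$, I would read off from the explicit factorizations of Theorems~\ref{factor1} and~\ref{factor3}, together with the stated formula for $r$, that the irreducible factor $f$ of $x^{N}+1$ over $\F_p$ is in fact a polynomial in $x^{N/m}$: writing $f(x)=\bar g(x^{N/m})$, the polynomial $\bar g$ is one of the irreducible factors of $y^{m}+1$ over $\F_p$ produced by the \emph{same} theorems applied with $n$ replaced by $r$ --- it is linear, $\bar g(y)=y+u$, when $p\equiv1\pmod 4$, and a quadratic trinomial, $\bar g(y)=y^2+\delta y-1$, when $p\equiv3\pmod4$. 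Set $\mathfrak{p}':=(p,\bar g(\zeta_{2^{r+1}}))\subset O_{\mathbb{K}'}$; since $\bar g\mid y^m+1$ over $\F_p$, this is a prime ideal of $O_{\mathbb{K}'}$, hence a full-rank sublattice of $O_{\mathbb{K}'}\cong\Z^{m}$, i.e.\ a lattice of rank $m=2^r$. Because $f(\zeta)=\bar g(\zeta^{N/m})=\bar g(\zeta_{2^{r+1}})$ already in $O_{\mathbb{L}}$, one gets at once $\mathfrak{p}=(p,f(\zeta))=\mathfrak{p}'O_{\mathbb{L}}$.

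\emph{Step 2 (orthogonal decomposition and the reduction).} As $\Z$-modules, $O_{\mathbb{L}}=\Z[\zeta]=\bigoplus_{j=0}^{N/m-1}\zeta^{j}O_{\mathbb{K}'}$; under the coefficient embedding the summand $\zeta^{j}O_{\mathbb{K}'}$ is supported precisely on the coordinates congruent to $j$ modulo $N/m$, so the summands are pairwise orthogonal, and for $0\le j<N/m$ multiplication by $\zeta^{j}$ is a plain coordinate shift, hence a lattice isometry of $O_{\mathbb{K}'}$ onto $\zeta^{j}O_{\mathbb{K}'}$. Intersecting with $\mathfrak{p}=\mathfrak{p}'O_{\mathbb{L}}$ gives $\mathfrak{p}=\bigoplus_{j=0}^{N/m-1}\zeta^{j}\mathfrak{p}'$, an orthogonal direct sum of $N/m$ mutually isometric copies of $\mathfrak{p}'$. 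Hence $\lambda_1(\mathfrak{p})=\lambda_1(\mathfrak{p}')$, and any shortest vector of $\mathfrak{p}'$, placed back on the coordinates $0,N/m,\dots,(m-1)N/m$, is a shortest vector of $\mathfrak{p}$ (and, since the two embeddings of a power-of-two cyclotomic field differ by a scalar, also a genuine shortest vector under the canonical embedding). Algorithmically: compute $A$ (the $2$-adic valuation of $p-1$ or $p+1$ according to $p\bmod 4$) and hence $r$; locate $f$ among the factors of Theorems~\ref{factor1}/\ref{factor3} and read off $\bar g$ --- the only real computation being extraction of roots of unity or Dickson-polynomial roots modulo $p$, which is polynomial; form a $\Z$-basis of $\mathfrak{p}'\subset\Z[y]/(y^{m}+1)$ by Hermite normal form of the generators $p,py,\dots,py^{m-1},\bar g(y),y\bar g(y),\dots$; call the SVP oracle once in dimension $2^r$; and lift the answer. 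Everything outside the oracle call costs $\mathrm{poly}(2^n,\log p)$.

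\emph{Main obstacle.} The crux is Step~1: establishing that $\mathfrak{p}$ is the extension to $O_{\mathbb{L}}$ of a prime of the \emph{specific} subfield $\Q(\zeta_{2^{r+1}})$ --- equivalently, that the irreducible factors of $x^{N}+1$ over $\F_p$ arise from those of $y^{m}+1$ over $\F_p$ by the substitution $y\mapsto x^{N/m}$ with no further splitting. This is exactly what the formula for $r$ (and its asymmetry between $p\equiv1$ and $p\equiv3\pmod4$) is engineered to guarantee, and it is where the explicit factorization theorems do the real work. Conceptually, $\Q(\zeta_{2^{r+1}})$ is the smallest field in the chain containing the decomposition field of $\mathfrak{p}$; one is forced to pass to this possibly-larger \emph{monogenic} cyclotomic field rather than to the decomposition field itself, which for $p\equiv3\pmod4$ is only an index-two subfield of it --- and that inflation from the decomposition degree $g$ to $2^r$ is exactly what shows up in the statement. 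The remaining interplay between the coefficient-embedding coordinate blocks and multiplication by $\zeta$ is routine.
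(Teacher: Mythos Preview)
Your argument is correct and follows essentially the same route as the paper: both identify the subfield $\mathbb{K}=\Q(\zeta^{2^{n-r}})$, use Theorems~\ref{factor1}/\ref{factor3} to see that $f(\zeta)$ already lies in $O_{\mathbb{K}}$, and then establish the orthogonal splitting $\mathfrak{p}=\bigoplus_{k=0}^{2^{n-r}-1}\zeta^{k}\mathfrak{c}$ to conclude $\lambda_1(\mathfrak{p})=\lambda_1(\mathfrak{c})$. The only cosmetic differences are that the paper defines the rank-$2^r$ sublattice as the intersection $\mathfrak{c}=\mathfrak{p}\cap O_{\mathbb{K}}$ (and mentions the Galois subgroup $\langle\sigma_{2^{r+1}+1}\rangle$ fixing it) while you define it as the extended ideal $\mathfrak{p}'=(p,\bar g)O_{\mathbb{K}'}$ and obtain the splitting from $\mathfrak{p}=\mathfrak{p}'O_{\mathbb{L}}$ together with the free $O_{\mathbb{K}'}$-module decomposition of $O_{\mathbb{L}}$ --- these two descriptions of the sublattice coincide, and the two derivations of the direct sum are the same computation in different clothing.
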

	\begin{proof} It is well known that the Galois group $G$ of $\mathbb{Q}(\zeta)$ over $\mathbb{Q}$ is isomorphic to the multiplicative group
		$(\mathbb{Z}/2^{n+1}\mathbb{Z})^*$. Let $G=\{\sigma_1, \sigma_3,\cdots,\sigma_{2^{n+1}-1}\}$ where
		\begin{align*}
		\sigma_i: & \ \ \ \mathbb{Q}(\zeta)\rightarrow \ \ \mathbb{Q}(\zeta);\\
		&\ \ \ \ \  \zeta\ \ \ \mapsto \ \ \zeta^i.
		\end{align*}
		We proceed by considering two separate cases.
		
		\textbf{Case 1:} First we deal with the case when $p\equiv 1 \pmod 4$.
		The theorem is vacuously true for $n<A$.
		
		If $n\geq A$, we have $r = A-1$. By Theorem \ref{factor1}, we know that $$f(x)= x^{2^{n-A+1}}+u = x^{2^{n-r}}+u$$  for some $u\in U_{A}$. Then the prime ideal lattice $\mathfrak{p}$ can be generated by $p$ and $f(\zeta)=\zeta^{2^{n-r}}+u$. Consider the subgroup $H = \langle  \sigma_{2^{r+1}+1} \rangle$ of $G$
		generated by $\sigma_{2^{r+1}+1}$. $H$ is  a subgroup of the decomposition group of the ideal $\mathfrak{p}$ since
		$$\sigma_{2^{r+1}+1}(p)=p, \ \ \sigma_{2^{r+1}+1}(f(\zeta))=f(\zeta).$$ Note that  $\mathbb{K}=\mathbb{Q}({\zeta^{2^{n-r}}})$ is the fixed field
		of $H$ and its integer ring $O_{\mathbb{K}}$ has a $\mathbb{Z}$-basis $(1,{\zeta^{2^{n-r}}},{\zeta^{2\cdot2^{n-r}}},\cdots,{\zeta^{(2^{r}-1)\cdot2^{n-r}}} )$.
		
		Let $\mathfrak{c}=\mathfrak{p}\bigcap O_{\mathbb{K}}.$
		We claim that $ \mathfrak{p} $ is a direct sum:
		\begin{equation}\label{eq2}
		\mathfrak{p} = \bigoplus_{k=0}^{2^{n-r}-1} \zeta^k  \mathfrak{c}.
		\end{equation}
		Indeed for any
		$a\in \mathfrak{p}$, there exist integers $z_i$'s and $w_i$'s such that
		\begin{align*}
		a = &\sum_{i=0}^{2^n-1} z_i \zeta^if(\zeta) + \sum_{i=0}^{2^n-1} w_ip\zeta^i\\
		=& \sum_{k=0}^{2^{n-r}-1} \zeta^k \sum_{j=0}^{2^{r}-1} (z_{k+j\cdot2^{n-r}} \zeta^{j\cdot2^{n-r}}f(\zeta)+ w_{k+j\cdot2^{n-r}} p\zeta^{j\cdot2^{n-r}})\\
		=& \sum_{k=0}^{2^{n-r}-1} \zeta^k  \bigg((\sum_{j=0}^{2^{r}-1} z_{k+j\cdot2^{n-r}} \zeta^{j\cdot2^{n-r}})f(\zeta)+ (\sum_{j=0}^{2^{r}-1}w_{k+j\cdot2^{n-r}} \zeta^{j\cdot2^{n-r}})p\bigg).
		\end{align*}
		Let $a^{(k)}= (\sum_{j=0}^{2^{r}-1} z_{k+j\cdot2^{n-r}} \zeta^{j\cdot2^{n-r}})f(\zeta)+ (\sum_{j=0}^{2^{r}-1}w_{k+j\cdot2^{n-r}} \zeta^{j\cdot2^{n-r}})p$ for any $k$.
		Since $p\in \mathfrak{c}$ and $f(\zeta)\in
		\mathfrak{c}$, $a^{(k)}\in\mathfrak{c}$.   We have established (\ref{eq2}).

		Since multiplication by $\zeta$ is an isometry and  for $x \in \mathfrak{c}$, the coefficients of $\zeta^i x$ and $\zeta^j x$ are disjoint for $i\not= j \mod 2^{n-r}$, Equation (\ref{eq2}) implies
		$$\lambda_1(\mathfrak{p})= \lambda_1(\mathfrak{c}),$$
		and that to find the shortest vector in the ideal lattice $\mathfrak{p}$, it is enough to find the shortest vector $v$ in the ideal lattice $\mathfrak{c}$, a lattice with dimension $2^r$. Indeed $\zeta^kv$ for any $0\leq k\leq 2^{n-r}-1$ will be a shortest vector in the ideal lattice  $\mathfrak{p}$.

		\textbf{Case 2:} For the case when $p\equiv 3 \pmod 4$, everything is similar
		except that $r =A$.

		\textbf{Algorithm:} We can summarize the algorithm to solve SVP in a prime  ideal  lattice as Algorithm \ref{alg:primei}.		
		\begin{algorithm}[htb]
			\caption{Solve SVP in prime ideal lattice}
			\label{alg:primei}
			\begin{algorithmic}[1]
				\Require a prime ideal $\mathfrak{p}= (p, f(\zeta))$ in $\Z[\zeta]$, where $p$ is odd.
				\Ensure a shortest vector in the corresponding prime ideal lattice.
				\State  Compute the ideal $\mathfrak{c}$ generated by $p$ and $f(\zeta)$ in $O_{\mathbb{K}}$ where $\mathbb{K}=\mathbb{Q}({\zeta^{2^{n-r}}})$.
				\State  Find a shortest vector $v$ in the $2^r$-dimensional lattice  $\mathfrak{c}$.
				\State  Output $v$.
			\end{algorithmic}
		\end{algorithm}
		
		The most time-consuming step in Algorithm \ref{alg:primei} is Step 2 and the other steps can be done in $\text{poly}(2^n, \log_2p)$ time. \hfill$\square$
	\end{proof}
	
	\begin{remark}
			By the decomposition (\ref{eq2}) above, a similar result will hold for a prime ideal $\mathfrak{p}$ in $O_\mathbb{L}$ other than $\mathbb{Q}(\zeta)$, whenever  $O_\mathbb{ L }$ is a free $O_\mathbb{ K }$-module where $\mathbb{ K }$ is the decomposition field  of  $\mathfrak{p}$, and some $\Z$-basis of $O_\mathbb{ K }$ can be extended to the $\Z$-basis of  $O_\mathbb{L}$ that determines the coefficient embedding. 
			If we disregard the last condition that a basis of  $O_\mathbb{ K }$ extends to a basis of  $O_\mathbb{ L }$,
                 there may be a distortion of length,  depending on the basis of $O_\mathbb{ K }$, when we lift the solution from 
                 $\mathfrak{c}$ to $\mathfrak{p}$. 
                 That is, an approximation factor, which may be much larger than 1, will be involved.
               \end{remark}
	\begin{remark}
		By the remark above, solving the closest vector problem (CVP) for a prime ideal lattice can be also reduced to solving CVP in some $2^r$-dimensional sublattice.
	\end{remark}

	\subsubsection{SVP of some special prime ideals in $\Z[\zeta_{2^{n+1}}]$}
	
	Using Theorem \ref{mainthm}, we can prove Theorem~\ref{prop:p}, which shows that the SVP for prime ideals lying above some special rational primes is very easy.

\noindent \textit{Proof of Theorem~\ref{prop:p}}

	If $p \equiv -3 \pmod 8$, we may write $p=4m+1$ with odd $m$. By Theorem \ref{factor1},
	$x^{2^n}+1$ is the product of $2$ irreducible binomials over $\F_p$ of degree $2^{n-1}$:
	$x^{2^n}+1 = (x^{2^{n-1}}+u_1)\cdot(x^{2^{n-1}}+u_2),$
	where $u_i$ satisfies $u_i^2 \equiv -1 \pmod p$.
	
	For any prime ideal $(p, \zeta^{2^{n-1}}+u_i)$ over $(p)$,  by the proof of Theorem \ref{mainthm},  the shortest vector can be found by solving the $2$-dimensional lattice $\mathcal{L}_i$ generated by
	$
	\left(
	\begin{array}{cc}
	u_i & 1 \\
	-1 & u_i \\
	p & 0 \\
	0 & p \\
	\end{array}
	\right).
	$
	Note that $(-1, u_i)\equiv u_i\cdot(u_i, 1) \pmod p$ and $(0,p) = p\cdot(u_i,1)-u_i\cdot(p,0)$. The generator matrix can be reduced to the basis of $\mathcal{L}_i$ as
	$
	\left(
	\begin{array}{cc}
	u_i & 1 \\
	p & 0 \\
	\end{array}
	\right),
	$  which is exactly the Hermite Normal Form of the lattice basis.

For any vector $v \in \mathcal{L}_i$, there exists an integer vector $(z_1,z_2)$ such that $v= (z_1,z_2)\left(
\begin{array}{cc}
u_i & 1 \\
p & 0 \\
\end{array}
\right)=(z_1 u_i+z_2 p,z_1)$. Note that
\begin{align*}
\|v\|^2 = (z_1 u_i+z_2 p)^2+z_1^2 = z_1^2(u_i^2+1) +z_2^2p^2+2pz_1z_2u_i \equiv 0 \pmod p.
\end{align*}
Then for the nonzero shortest vector $v$, we have $0<\|v\|^2<\frac{4}{\pi}\cdot p<2p$ (by Minkowski's Theorem \cite{MGbook}) and $\|v\|^2 \equiv 0 \pmod p$, which implies that
$\|v\|^2=p$.

In case $p \equiv 3 \pmod 8$, we may write $p=4m-1$ with odd $m$. By Theorem \ref{factor3},  then  $x^{2^n}+1$ is the product of $2$ irreducible binomials over $\F_p$ of degree $2^{n-1}$:
$x^{2^n}+1 = (x^{2^{n-1}}+\delta_1x^{2^{n-2}}-1)\cdot (x^{2^{n-1}}+\delta_2x^{2^{n-2}}-1),$
where $\delta_i$ satisfies $\delta_i^2\equiv -2 \pmod p$ since the Dickson polynomial is $D_2(x,-1)=X^2+2$.

For any prime ideal $(p, \zeta^{2^{n-1}}+ \delta_i \zeta^{2^{n-2}}-1)$ over $(p)$, we similarly consider the shortest vector in $\mathcal{L}_i$ generated by
$$\left(
\begin{array}{cccc}
-1 & \delta_i & 1 & 0 \\
0 & -1 & \delta_i & 1 \\
-1 & 0 & -1 & \delta_i \\
-\delta_i & -1 & 0 & -1 \\
p & 0 & 0 & 0 \\
0 & p & 0 & 0 \\
0 & 0 & p & 0 \\
0 & 0 & 0 & p \\
\end{array}
\right).
$$
Similarly, we can easily get the basis for $\mathcal{L}_i$ in the Hermite Normal Form
$$
\left(
\begin{array}{cccc}
0 & -1 & \delta_i & 1 \\
-1 & \delta_i & 1 & 0 \\
0 & p & 0 & 0 \\
p & 0 & 0 & 0 \\
\end{array}
\right),
$$
and prove that for any vector $v\in\mathcal{L}$,
$$\|v\|^2 \equiv 0 \pmod p.$$
For the shortest vector $v$, by Minkowski's Theorem, we know $0<\|v\|^2\leq \frac{4\sqrt{2}}{\pi}<2p$, which implies that $\|v\|^2 =p$. By Theorem \ref{mainthm}, the proposition follows. \hfill$\square$

	\subsection{SVP average-case hardness for prime ideals in $\Z[\zeta]$}\label{averagesubs}
	Precisely defining the average-case hardness of SVP for a prime ideal lattice in $\Z[\zeta]$ requires specifying
	a distribution. We consider the following three distributions.
	\subsubsection{The first distribution.}
	To select a random prime ideal, one fixes a large $M$, uniformly randomly selects a prime number in the set 
	\[ \{  p {\rm \ is\ a\ prime}: p< M  \},\]
	and then uniformly randomly selects a prime ideal lying over $ p $.
	This process provides a reasonable distribution among prime ideals,
	since every prime ideal in the ring of integers of
	$ \Q[x]/(f(x)) $  is of the form $(p, g(x))$, where $p$ is
	a prime number and $g(x)$ is an irreducible factor of $ f(x) $ over $ \F_p[x] $.
	Since roughly half of all primes $p\leq M$ satisfy $ p \equiv \pm 3 \pmod{8} $, according
	to Dirichlet's theorem on arithmetic progressions, at least half of all such $p$ have the property that the ideals lying over $p$
	admit an efficient algorithm
	for SVP.
	
	\subsubsection{The second distribution.}
	
	Again fixing a large $M$,  we might alternatively
	select a prime ideal uniformly at random from the set
	\[ \{ \mathfrak{p} {\rm \ prime\ ideal} : p \in \mathfrak{p}, p {\rm \ is\ a\ prime}, p< M  \}.\]
	In this case, a non-negligible fraction of prime ideals admit efficient SVP
	algorithm.  More precisely, we have

	\begin{proposition} Under the distribution above, a random prime ideal of $\Z[\zeta]$ admits an efficient SVP algorithm with probability at least $\frac{1}{1+2^{n-1}}$.
		
	\end{proposition}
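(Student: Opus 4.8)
The plan is to write the probability in question as a ratio $E(M)/T(M)$ of two counts and to estimate both by Dirichlet's theorem on primes in arithmetic progressions. Here $E(M)$ is the number of ``easy'' prime ideals of $\Z[\zeta]$ lying over rational primes below a cutoff $M$, and $T(M)$ is the total number of prime ideals lying over rational primes below $M$; by definition of the distribution the sought probability is exactly $E(M)/T(M)$. Write $\pi(M)$ for the number of rational primes below $M$, and for an odd prime $p$ let $g(p)$ be the number of prime ideals of $\Z[\zeta]$ above $p$; by Theorems~\ref{factor1} and \ref{factor3}, $g(p)$ depends only on $p\bmod 2^{n+1}$. The case $n=1$ is trivial: every nonzero ideal of $\Z[i]$ is a rank-$2$ lattice and hence admits an efficient SVP algorithm, so the probability is $1\ge \frac12=\frac{1}{1+2^{0}}$. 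Henceforth assume $n\ge 2$.

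First I would extract from the explicit factorizations three facts about $g$. (i) $g(p)=2^{n}$, i.e.\ $p$ splits completely in $\Z[\zeta]$, if and only if $p\equiv 1\pmod{2^{n+1}}$; by Dirichlet such primes satisfy $\#\{p<M:p\equiv 1\pmod{2^{n+1}}\}\sim 2^{-n}\pi(M)$. (ii) Every odd prime $p\not\equiv 1\pmod{2^{n+1}}$ has $g(p)\le 2^{n-1}$: when $p\equiv 1\pmod 4$ this follows from Theorem~\ref{factor1} because then $v_2(p-1)\le n$, and when $p\equiv 3\pmod 4$ it is immediate from Theorem~\ref{factor3}. (iii) If $p\equiv\pm 3\pmod 8$ then $g(p)=2$, and by Theorem~\ref{prop:p} both prime ideals above $p$ admit a $\text{poly}(N,\log p)$ SVP algorithm, hence are easy; the prime ideal above $2$ is easy as well.

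Next I would assemble the estimates. For $E(M)$: the primes $p\equiv\pm 3\pmod 8$ satisfy $\#\{p<M:p\equiv\pm 3\pmod 8\}\sim \frac12\pi(M)$, and each carries exactly $2$ easy prime ideals by (iii), so $E(M)\ge (1-o(1))\pi(M)$. For $T(M)$: split the odd primes below $M$ into those $\equiv 1\pmod{2^{n+1}}$ and the rest. By (i) the former number $\sim 2^{-n}\pi(M)$ and carry $2^{n}$ prime ideals each, contributing $\sim\pi(M)$ in all; by (ii) the rest carry at most $2^{n-1}$ prime ideals each, contributing at most $2^{n-1}\pi(M)$; adding the one prime ideal above $2$ gives $T(M)\le (1+2^{n-1}+o(1))\pi(M)$. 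Hence
$$\frac{E(M)}{T(M)}\ \ge\ \frac{(1-o(1))\pi(M)}{(1+2^{n-1}+o(1))\pi(M)},$$
and the right-hand side tends to $\frac{1}{1+2^{n-1}}$ as $M\to\infty$. (Carrying out the sum $\sum_{p<M}g(p)$ exactly, again via Dirichlet, even gives the sharper limiting probability $\frac{2}{2n+1}$, which exceeds $\frac{1}{1+2^{n-1}}$ because $2^{n}\ge 2n-1$.)

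The only real work is the bookkeeping behind (i)--(iii): translating the factorization patterns of Theorems~\ref{factor1} and \ref{factor3} into the statement that $g(p)$ is controlled by $v_2(p\mp1)$, and in particular that its maximal value $2^{n}$ occurs only on a set of primes of density $2^{-n}$. One should also note that ``at least'' in the statement is to be understood in the limit $M\to\infty$; for finite $M$ the bound is subject to the error terms in the prime number theorem for arithmetic progressions. I do not expect any individual step to present a genuine obstacle.
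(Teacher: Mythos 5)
Your proof is correct and follows the paper's strategy: count easy prime ideals (two above each $p\equiv\pm 3\pmod 8$) against all prime ideals above rational primes $p<M$, estimating both via Dirichlet's theorem on primes in arithmetic progressions. Your bookkeeping of the denominator is slightly more refined than the paper's---isolating the completely-split primes $p\equiv 1\pmod{2^{n+1}}$ of density $2^{-n}$ rather than crudely bounding $g(p)\le 2^n$ for all $p\equiv\pm 1\pmod 8$---which recovers the same bound $\frac{1}{1+2^{n-1}}$ but also lets you note the sharper limiting probability $\frac{2}{2n+1}$, a small improvement not recorded in the paper.
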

	\begin{proof}For simplicity, we disregard the single prime ideal lying over 2.
		Note that for $p=8k\pm 3$, there are exactly two prime ideals over $p$, and, by  Theorem~\ref{prop:p},  the SVP for the corresponding ideal lattices is easy. For $p=8k\pm 1$, there are at most $2^n$ prime ideals lying over $p$, by Theorems \ref{factor1} and \ref{factor3}. Then by Dirichlet's prime number theorem, even if we only count the prime ideals lying over $p=8k\pm 3$, the fraction of easy instances is at least $\frac{1}{1+2^{n-1}}$. \hfill$\square$
	\end{proof}

	\subsubsection{The third distribution.}
	The third distribution is more common in mathematics.  Namely, after fixing a large $M$,
	we select uniformly at random a prime ideal from the set
	\[ \{ \mathfrak{p} {\rm \ prime\ ideal} : \mathcal{N}(\mathfrak{p})< M  \},\]
	where $\mathcal{N}(\mathfrak{p})$ is the norm of the ideal $\mathfrak{p}$.
	
	By Theorem \ref{mainthm}, SVP for a prime ideal lattice $\mathfrak{p}$ reduces to SVP for a $2^r$-dimensional sub-lattice $\mathfrak{c}$, where $r$ is as defined in the statement of Theorem \ref{mainthm}.  Note that our algorithm will not improve matters if $r=n$, that is, if $p$ splits completely in $\mathbb{Q}(\zeta)$, or equivalently if $\mathcal{N}(\mathfrak{p})=p$.  By Chebotarev's density theorem \cite{Che1926}, there are about $\frac{M}{2^{n}\log M}$ rational primes which split in $\mathbb{Q}(\zeta)$ and hence $\frac{M}{\log M}$ prime ideals lying above those primes, for which our algorithm cannot provide a reduction for SVP.
	
	If our algorithm is to provide a reduction, the prime ideal under study must lie over a rational prime $p$ with $p\leq \sqrt{M}$, since $\mathcal{N}(\mathfrak{p})=p^{f}< M $ where $f$ is some integer greater than 1. Hence there are at most $\sqrt{M}$ such primes and hence at most $2^{n-1}\sqrt{M}$ prime ideals for which our algorithm provides a reduction.
	
	Under such a distribution, therefore, the density of the easy instances for our algorithm is at most $\frac{2^{n-1}\log M}{\sqrt{M}}$, which goes to zero when $M$ tends to infinity.
	
	\begin{remark}
From a cryptographic perspective,  there seems to be no construction relying
on the average hardness of ideal SVP in ideals following one of the two first
distributions above. However, our algorithm  reveals the concrete reason why we should avoid such distributions in the cryptographic constructions although it seems very easy to sample according to the two distributions.

\end{remark}
	
	\subsection{Solving SVP for a general ideal lattice in $\mathbb{Z}[\zeta_{2^{n+1}}]$} \label{sec:idealg}

	For simplicity, we let $\zeta = \zeta_{2^{n+1}}$. We will show that 
	even for a general ideal lattice $\mathcal{I} \subset\Z[\zeta]$, there is a similar hierarchy for the hardness of SVP. We would like to stress that although the following theorem refers to the prime factorization of $\mathcal{I}$, the resulting algorithm does not require it. 
		
	\begin{theorem}\label{mainthmg}
		Let $\mathcal{I}$ be a nonzero ideal of $\Z[\zeta]$ with prime factorization
		$$ \mathcal{I} = \mathfrak{p}_1\cdot \mathfrak{p}_2\cdots \mathfrak{p}_t,$$
		where $\mathfrak{p}_i =(f_i(\zeta),p_i)$ for rational primes $p_i$, and where the $\mathfrak{p}_i$ are  not necessarily distinct. Write
		$p_i = 2^{A_i}\cdot m_i + 1$ when $p_i\equiv 1 \pmod 4$ and $p_i = 2^{A_i}\cdot m_i - 1$ when $p_i\equiv 3 \pmod 4$  with odd $m_i$, and let
		$r = \max \{r_i\},$ where
		$$r_i = \left\{
		\begin{array}{ll}
		\min \{A_i-1,n\}, & \hbox{if $p_i\equiv 1 \pmod 4$;} \\
		\min\{A_i,n\}, & \hbox{if $p_i\equiv 3 \pmod 4$;}\\
		n, &\hbox{if $p_i =2$.}
		\end{array}
		\right.
		$$
		Then the shortest vector in the ideal lattice $\L$ corresponding to $\mathcal{I}$  can be solved via solving SVP in a $2^r$-dimensional lattice.
	\end{theorem}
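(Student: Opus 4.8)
The plan is to follow the proof of Theorem~\ref{mainthm} almost verbatim, replacing the single prime ideal by the product $\mathcal{I}$ and the parameter $r_i$ by $r=\max_i r_i$. First I would dispose of the trivial case: if some $p_i=2$ then $r=n$, so $\mathbb{L}=\Q(\zeta)$ is itself the ``$2^r$-dimensional'' sublattice and there is nothing to prove. So assume every $p_i$ is odd, and set $\mathbb{K}=\Q(\zeta^{2^{n-r}})=\Q(\zeta_{2^{r+1}})$. Then $O_{\mathbb{K}}=\Z[\zeta^{2^{n-r}}]$ has $\Z$-basis $(1,\zeta^{2^{n-r}},\ldots,\zeta^{(2^{r}-1)2^{n-r}})$, and $\Z[\zeta]$ is a free $O_{\mathbb{K}}$-module with basis $(1,\zeta,\ldots,\zeta^{2^{n-r}-1})$, a basis compatible with the power $\Z$-basis $(1,\zeta,\ldots,\zeta^{2^{n}-1})$ defining the coefficient embedding. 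Consequently, for any ideal $\mathfrak{a}$ of $O_{\mathbb{K}}$ we have $\mathfrak{a}\,\Z[\zeta]=\bigoplus_{k=0}^{2^{n-r}-1}\zeta^{k}\mathfrak{a}$ (read off inside the direct sum $\Z[\zeta]=\bigoplus_k\zeta^kO_{\mathbb{K}}$), and in particular $\mathfrak{a}\,\Z[\zeta]\cap O_{\mathbb{K}}=\mathfrak{a}$ by inspecting the $k=0$ component.

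The key step is to show that every prime factor is already ``defined over $\mathbb{K}$'': writing $\mathfrak{p}_i=(f_i(\zeta),p_i)$, both generators lie in $O_{\mathbb{K}}$. By Theorems~\ref{factor1} and~\ref{factor3}, the irreducible factor $f_i$ of $x^{2^n}+1$ over $\F_{p_i}$ can be chosen so that its nonconstant terms are powers of $\zeta^{2^{n-r_i}}$ --- a linear term $\zeta^{2^{n-r_i}}$ when $p_i\equiv 1\bmod 4$, and $\zeta^{2^{n-r_i+1}}+\delta_i\zeta^{2^{n-r_i}}$ when $p_i\equiv 3\bmod 4$ (in the boundary congruence ranges where no such $f_i$ exists one has $r_i=n$, hence $r=n$ and we are in the trivial case). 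Since $r_i\le r$, we have $2^{n-r}\mid 2^{n-r_i}$, so $\zeta^{2^{n-r_i}}=(\zeta^{2^{n-r}})^{2^{r-r_i}}\in O_{\mathbb{K}}$ and hence $f_i(\zeta)\in O_{\mathbb{K}}$; also $p_i\in\Z\subseteq O_{\mathbb{K}}$. (This is the concrete form of the fact, recorded in Appendices~A and~B, that $\mathbb{K}$ lies above the decomposition field of $\mathfrak{p}_i$.) Therefore the ideal $\mathfrak{q}_i:=(f_i(\zeta),p_i)O_{\mathbb{K}}$ of $O_{\mathbb{K}}$ satisfies $\mathfrak{q}_iO_{\mathbb{L}}=\mathfrak{p}_i$.

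Now I would multiply. Extension of ideals along $O_{\mathbb{K}}\hookrightarrow O_{\mathbb{L}}$ is multiplicative, so with $\mathfrak{c}:=\prod_i\mathfrak{q}_i$ (a nonzero ideal of $O_{\mathbb{K}}$, hence a lattice of rank $2^{r}$ under the coefficient embedding of $\mathbb{K}$) we get $\mathfrak{c}\,O_{\mathbb{L}}=\prod_i(\mathfrak{q}_iO_{\mathbb{L}})=\prod_i\mathfrak{p}_i=\mathcal{I}$, and by the module facts above $\mathfrak{c}=\mathcal{I}\cap O_{\mathbb{K}}$ and
\[ \mathcal{I}=\mathfrak{c}\,O_{\mathbb{L}}=\bigoplus_{k=0}^{2^{n-r}-1}\zeta^{k}\mathfrak{c}, \]
the exact analogue of Equation~(\ref{eq2}). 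As in the proof of Theorem~\ref{mainthm}, under the coefficient embedding multiplication by $\zeta$ is an isometry and the summands $\zeta^{k}\mathfrak{c}$ are supported on pairwise disjoint blocks of $2^{r}$ coordinates, so $\lambda_1(\mathcal{I})=\lambda_1(\mathfrak{c})$ and a shortest nonzero vector $v$ of the $2^{r}$-dimensional lattice $\mathfrak{c}$ produces shortest nonzero vectors $\zeta^{k}v$ ($0\le k<2^{n-r}$) of $\mathcal{I}$. Finally, to justify the claim that the reduction needs no prime factorization of $\mathcal{I}$: starting from a $\Z$-basis (or HNF) of $\mathcal{I}$, one computes for $r'=1,2,\ldots,n$ the lattice $\mathfrak{c}'=\mathcal{I}\cap O_{\Q(\zeta_{2^{r'+1}})}$ by integer linear algebra and tests whether $\bigoplus_{k}\zeta^{k}\mathfrak{c}'=\mathcal{I}$; by the above the smallest successful $r'$ is at most $r$, and one invokes the SVP oracle once, on $\mathfrak{c}'$.

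The main obstacle I anticipate is the second step: verifying that each $\mathfrak{p}_i$ is extended from $O_{\mathbb{K}}$, which requires pinning $f_i$ down in the explicit shapes of Theorems~\ref{factor1} and~\ref{factor3}, tracking the divisibilities $2^{n-r}\mid 2^{n-r_i}$ that follow from $r_i\le r$, and handling the boundary congruence ranges where $r_i=n$; equivalently, confirming that $r=\max_i r_i$ is precisely the first level at which the chain $\Q[i]\subset\Q[\zeta_8]\subset\cdots\subset\Q[\zeta_{2^{n+1}}]$ contains every relevant decomposition field. Once the decomposition $\mathcal{I}=\bigoplus_k\zeta^k\mathfrak{c}$ is in hand, everything is identical to the prime-ideal case of Theorem~\ref{mainthm}.
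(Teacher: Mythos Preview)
Your proposal is correct and reaches the same decomposition $\mathcal{I}=\bigoplus_{k=0}^{2^{n-r}-1}\zeta^{k}\mathfrak{c}$ with $\mathfrak{c}=\mathcal{I}\cap O_{\mathbb{K}}$ that the paper obtains, but you get there by a genuinely different route. The paper defines $\mathfrak{c}=\mathcal{I}\cap O_{\mathbb{K}}$ from the outset and proves the decomposition by induction on the number $t$ of prime factors: the base case is Theorem~\ref{mainthm}, and in the inductive step one writes $\mathcal{I}=\overline{\mathcal{I}}\cdot\mathfrak{p}_t$, decomposes a product $xy$ with $x\in\overline{\mathcal{I}}$, $y\in\mathfrak{p}_t$ into $\zeta^k$-components via the inductive hypothesis, and then explicitly re-collects the cross terms $\zeta^{i+j}x^{(i)}y^{(j)}$ modulo $2^{n-r}$. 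You instead observe that each $\mathfrak{p}_i$ is literally the extension $\mathfrak{q}_iO_{\mathbb{L}}$ of an ideal $\mathfrak{q}_i\subset O_{\mathbb{K}}$ (because its two generators already live in $O_{\mathbb{K}}$), invoke multiplicativity of ideal extension to get $\mathcal{I}=\big(\prod_i\mathfrak{q}_i\big)O_{\mathbb{L}}$, and then read off the direct-sum decomposition from the free $O_{\mathbb{K}}$-module structure of $O_{\mathbb{L}}$.

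What each approach buys: your argument is shorter and more structural, replacing the paper's element-level induction by two clean module-theoretic facts (extension is multiplicative; $\mathfrak{a}O_{\mathbb{L}}=\bigoplus_k\zeta^k\mathfrak{a}$ for any ideal $\mathfrak{a}\subset O_{\mathbb{K}}$ when $O_{\mathbb{L}}$ is free over $O_{\mathbb{K}}$). It also makes transparent \emph{why} $r=\max_i r_i$ is the right parameter: it is exactly the smallest level in the chain $\Q(i)\subset\Q(\zeta_8)\subset\cdots$ at which every $\mathfrak{p}_i$ becomes an extended ideal. The paper's induction, on the other hand, is more self-contained and does not appeal to ideal extension or tensor-product reasoning, which may be preferable for readers less comfortable with those tools. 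Your handling of the boundary cases ($r=n$, including $p_i=2$ and the small-$n$ ranges of Theorems~\ref{factor1} and~\ref{factor3}) and of the factorization-free algorithm matches the paper.
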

	
	\begin{proof}
		If $r = n$, then the theorem follows simply.

		If $r<n$, W.L.O.G., we assume $r=r_1$. Following the proof of  Theorem \ref{mainthm}, denote the Galois group $G=\{\sigma_1, \sigma_3,\cdots,\sigma_{2^{n+1}-1}\}$ of $\mathbb{Q}(\zeta)$ over $\mathbb{Q}$, where $\sigma_i(\zeta) = \zeta^i$.  Consider the subgroup $H = \langle \sigma_{2^{r+1}+1} \rangle$ of $G$ generated by $\sigma_{2^{r+1}+1}$.  For any $\tau\in H$ and every prime ideal $\mathfrak{p}_i= (p_i, f_i(\zeta))$, we have $\tau(\mathfrak{p}_i)=\mathfrak{p}_i$ since
		$\sigma_{2^{r+1}+1}(p_i)=p_i, \ \ \sigma_{2^{r+1}+1}(f_i(\zeta))=f_i(\zeta).$  Note that  $\mathbb{K}=\mathbb{Q}({\zeta^{2^{n-r}}})$ is the fixed field
		of $H$ and its integer ring $O_{\mathbb{K}}$ has a $\mathbb{Z}$-basis $(1,{\zeta^{2^{n-r}}},{\zeta^{2\cdot2^{n-r}}},\cdots,{\zeta^{(2^{r}-1)\cdot2^{n-r}}} )$.
		
		Let $\mathfrak{c}=\mathcal{I}\bigcap O_{\mathbb{K}}.$
		We claim that for any
		$a\in \mathcal{I}$, there exist $a^{(k)}\in \mathfrak{c}$ for $0\leq k< 2^{n-r}$, such that
		\[ a= \sum_{k=0}^{2^{n-r}-1} \zeta^k a^{(k)}.
		\]
		We proceed by induction. When $t=1$ the above claim holds by  Theorem \ref{mainthm}. Suppose the claim holds for $t-1$.  Then setting $\mathcal{I} = \mathfrak{p}_1\cdot \mathfrak{p}_2\cdots \mathfrak{p}_t$,  and
		$\overline{\mathcal{I}} = \mathfrak{p}_1\cdot \mathfrak{p}_2\cdots \mathfrak{p}_{t-1}$,  we have $\mathcal{I} = \overline{\mathcal{I}}\cdot \mathfrak{p}_t$. For any $a\in \mathcal{I}$, we can write $a=\sum x_i y_i$ where $x_i\in \overline{\mathcal{I}}$ and $y_i\in \mathfrak{p}_t$.
		It suffices to show that for any $xy$, where $x\in \overline{\mathcal{I}}$ and $y\in \mathfrak{p}_t$, there exist $b^{(k)}\in \mathcal{I}\bigcap O_{\mathbb{K}}$ for $0\leq k< 2^{n-r}$, such that $xy = \sum_{k=0}^{2^{n-r}-1} \zeta^k b^{(k)}$.
		
		By the induction assumption, there exist $x^{(i)}\in \overline{\mathcal{I}}\bigcap O_{\mathbb{K}}$ for $0\leq i< 2^{n-r}$ such that $x = \sum_{i=0}^{2^{n-r}-1} \zeta^i x^{(i)}$, and  there exist
		$y^{(j)}\in \mathfrak{p}_t\bigcap O_{\mathbb{K}}$ for $0\leq j< 2^{n-r}$ such that $y = \sum_{j=0}^{2^{n-r}-1} \zeta^j y^{(j)}$.  Hence, we have
		\begin{align*}
		xy = &\sum_{i=0}^{2^{n-r}-1}\sum_{j=0}^{2^{n-r}-1} \zeta^{i+j} x^{(i)}y^{(j)}\\
		=& \sum_{k=0}^{2^{n-r}-1} \zeta^k \sum_{i+j=k} x^{(i)}y^{(j)} +  \sum_{k=2^{n-r}}^{2\cdot2^{n-r}-2} \zeta^k \sum_{i+j=k} x^{(i)}y^{(j)}\\
		=& \sum_{k=0}^{2^{n-r}-1} \zeta^k \sum_{i+j=k} x^{(i)}y^{(j)} +  \sum_{k=0}^{2^{n-r}-2} \zeta^k \sum_{i+j=k+2^{n-r}} \zeta^{2^{n-r}} x^{(i)}y^{(j)}\\
		=& \sum_{k=0}^{2^{n-r}-2} \zeta^k (\sum_{i+j=k} x^{(i)}y^{(j)}+\sum_{i+j=k+2^{n-r}} \zeta^{2^{n-r}} x^{(i)}y^{(j)}) + \zeta^{2^{n-r}-1}\sum_{i+j=2^{n-r}-1} x^{(i)}y^{(j)}.
		\end{align*}
		Let $b^{(k)}= \sum_{i+j=k} x^{(i)}y^{(j)}+\sum_{i+j=k+2^{n-r}} \zeta^{2^{n-r}} x^{(i)}y^{(j)}$ for any $0\leq k\leq 2^{n-r}-2$  and $b^{(2^{n-r}-1)}=\sum_{i+j=2^{n-r}-1} x^{(i)}y^{(j)}$.
		We have that $b^{(k)}\in\mathcal{I}\bigcap O_{\mathbb{K}}$ for $0\leq k< 2^{n-r}$. Hence, for any
		$a\in \mathcal{I}$, there exist $a^{(k)}\in \mathfrak{c}$ for $0\leq k< 2^{n-r}$, such that
		$a=\sum_{k=0}^{2^{n-r}-1} \zeta^k a^{(k)}.$
		
		As in the proof of Theorem \ref{mainthm}, we can show that $\lambda_1(\mathcal{I})=\lambda_1(\mathfrak{c})$ and any nonzero shortest vector in $\mathfrak{c}$ will yield $2^{n-r}$ nonzero shortest vectors in $\mathcal{I}$.\hfill $\square$

	\end{proof}
	
	We would like to point out that in some cases, the $r$ in Theorem \ref{mainthmg}
	can be improved. Consider the case when $n\geq 3$ and
	$\mathcal{I}=(2,\zeta-1)^2=(2,\zeta^2+1)$. We need to solve SVP in a $2^n$-dimensional lattice by Theorem \ref{mainthmg}. However, using the intermediate field $\mathbb{Q}(\zeta^2)$ as in the proof of Theorem \ref{mainthmg},  we can find a shortest vector by solving SVP in a $2^{n-1}$-dimensional lattice.

	Furthermore, since for any
	$a\in \mathcal{I}$, there exist $a^{(k)}\in \mathfrak{c}$ for $0\leq k< 2^{n-r}$, such that
	$
	a=\sum_{k=0}^{2^{n-r}-1} \zeta^k a^{(k)},
	$
	we conclude that if $(b^{(i)})_{0\leq i<2^r}$ is a basis of the ideal lattice $\mathfrak{c}$, then $(\zeta^j b^{(i)})_{0\leq i<2^r, 0\leq j<2^{n-r}}$ is a basis of the ideal lattice $\mathcal{I}$. Denote by $\mathcal{L}_j$ the lattice generated by $(\zeta^j b^{(i)})_{0\leq i<2^r}$. Then we have that the ideal lattice $\mathcal{I}$ has an orthogonal decomposition:
	$\L_0\oplus \L_1\oplus\cdots\oplus\L_{2^{n-r}-1}.$

	In fact, for any $\bar{r}$, let $\mathfrak{c}=\mathcal{I}\bigcap O_{\mathbb{K}}$ where  $\mathbb{K}=\mathbb{Q}({\zeta^{2^{n-\bar{r}}}})$. For any basis $(b^{(i)})_{0\leq i<2^{\bar{r}}}$ of the ideal lattice $\mathfrak{c}$, if $(\zeta^j b^{(i)})_{0\leq i<2^{\bar{r}}, 0\leq j<2^{n-\bar{r}}}$ is a basis of the ideal lattice $\mathcal{I}$ (meaning that the ideal lattice  $\mathcal{I}$ has an orthogonal decomposition), then the shortest vector in $\mathfrak{c}$ is also a shortest vector in $\mathcal{I}$. Hence we have the following algorithm to solve SVP for a general ideal in $\mathbb{Z}[\zeta]$ without knowing the prime factorization of the ideal.
	\begin{algorithm}
		\caption{Solve SVP in general ideal lattice}
		\label{alg:generali}
		\begin{algorithmic}[1]
			\Require an ideal $\mathcal{I}$;
			\Ensure a shortest vector in the corresponding ideal lattice $\L$.
			\For{$\bar{r}=1$ to n}
			\State Compute a basis $(b^{(i)})_{0\leq i<2^{\bar{r}}}$  of the ideal lattice $\mathfrak{c}=\mathcal{I}\bigcap O_{\mathbb{K}}$, where  $\mathbb{K}=\mathbb{Q}({\zeta^{2^{n-\bar{r}}}})$.
			\If{$(\zeta^j b^{(i)})_{0\leq i<2^{\bar{r}}, 0\leq j<2^{n-{\bar{r}}}}$ is exactly a basis of ideal lattice $\mathcal{I}$}
			\State Find a shortest vector $v$ in the $2^{\bar{r}}$-dimensional lattice  $\mathfrak{c}$;
			\State  Output $v$.
			\EndIf
			\EndFor
			
		\end{algorithmic}
	\end{algorithm}
	
	Note that Step 2 can be done efficiently by computing the intersection of the lattices $\mathcal{I}$ and $O_{\mathbb{K}}$ under the coefficient embedding.

	\begin{remark}
		By the proof of Theorem \ref{mainthmg}, solving the closest vector problem (CVP) for a general ideal lattice can also be reduced to solving CVP in some $2^r$-dimensional lattice.
	\end{remark}

	\section{Conclusion and open problems}\label{sec:con}
	
	We have investigated the SVP of prime ideal lattices in  the finite Galois extension of $\mathbb{Q}$, 
  and designed an algorithm exploiting the subfield structure of such fields to solve Hermite-SVP for prime ideal lattices.	
	For	the power-of-two cyclotomic fields, 
	 we obtained an efficient algorithm for solving SVP in many ideal lattices, either prime or non-prime ideals.
	We also determined the length of the shortest vector of those prime ideals
	lying over rational primes congruent to $  \pm 3\pmod{8} $.
	It is an interesting problem to study the length of the shortest vectors
	in other prime ideals.
	The worst case hardness of prime ideal lattice SVP for
	power-of-two cyclotomic fields is also left open.

  \subsubsection*{Acknowledgements.} We thank the anonymous referees for their
  valuable suggestions on how to improve this paper. This work is supported by
  National Key Research and Development Program of China (No. 2020YFA0712300,
  2018YFA0704705),  National Natural Science Foundation of China (No.
  62032009, 61732021, 61572490) for Y. Pan and J. Xu, and National Science Foundation of USA
  (CCF-1900820) for N. Wadleigh and Q. Cheng.

\bibliographystyle{splncs04}
\bibliography{ilsvp}

	\appendix

	\section{The subfields of $\mathbb{Q}(\zeta_{2^n})$}
	Now we sketch the subfield lattice of $\mathbb{Q}(\zeta_{2^{n+1}})$. Consider the three subfields $$\mathbb{Q}(\zeta_{2^{n+1}}+\zeta_{2^{n+1}}^{-1}),~\mathbb{Q}(\zeta_{2^{n}}),~\mathbb{Q}(\zeta_{2^{n+1}}-\zeta_{2^{n+1}}^{-1}).$$
	
	\noindent First we claim $\mathbb{Q}(\zeta_{2^{n+1}})$ is degree two over each. On the one hand, all are proper subfields since $\mathbb{Q}(\zeta_{2^{n+1}}+\zeta_{2^{n+1}}^{-1})$ is contained in the fixed field of the automorphism $\zeta_{2^{n+1}}\mapsto\zeta_{2^{n+1}}^{-1}$, and $\mathbb{Q}(\zeta_{2^{n+1}}-\zeta_{2^{n+1}}^{-1})$
	is in the fixed field of the automorphism $\zeta_{2^{n+1}}\mapsto-\zeta_{2^{n+1}}^{-1}$. 
	On the other hand, $\zeta_{2^{n+1}}$ is a root of the quadratic polynomials $x^2-(\zeta_{2^{n+1}}+\zeta_{2^{n+1}}^{-1})x+1\in \mathbb{Q}(\zeta_{2^n}+\zeta_{2^{n+1}}^{-1})[x]$ and $~~x^2-(\zeta_{2^{n+1}}-\zeta_{2^{n+1}}^{-1})x-1\in \mathbb{Q}(\zeta_{2^{n+1}}-\zeta_{2^{n+1}}^{-1})[x].$  
	
	Moreover, since the involutions $$\zeta_{2^{n+1}}\mapsto\zeta_{2^{n+1}}^{-1}, ~\zeta_{2^{n+1}}\mapsto \zeta_{2^{n+1}}^{2^{n-1}+1},~\zeta_{2^{n+1}}\mapsto-\zeta_{2^{n+1}}^{-1}$$ are distinct, these three subfields are distinct.  
	Finally it is routine to sketch the subgroup lattice of
	$\mathbb{Z}_2\oplus\mathbb{Z}_{2^{n-1}}\cong(\mathbb{Z}/2^{n+1}\mathbb{Z})^*\cong \mathrm{Gal}(\mathbb{Q}(\zeta_{2^{n+1}})/\mathbb{Q})$:	
	$$
	\begin{tikzcd}
	 {}                                                                                            & {\langle (0,0)\rangle}                                                                            &                                                 \\
   {\langle(1,0)\rangle} \arrow[overlay, ru, no head]                                                     & {\langle(0,2^{n-2})\rangle} \arrow[u, no head]                                                    & {\langle(1,2^{n-2})\rangle} \arrow[lu, no head] \\
  {\langle(1,0),(0,2^{n-2})\rangle} \arrow[u, no head] \arrow[ru, no head] \arrow[rru, no head] & {\langle(0,2^{n-3})\rangle} \arrow[u, no head]                                                    & {\langle(1,2^{n-3})\rangle} \arrow[lu, no head] \\
{\langle(1,0),(0,2^{n-3})\rangle} \arrow[u, no head] \arrow[ru, no head] \arrow[rru, no head] & {\langle(0,2^{n-4})\rangle} \arrow[u, no head]                                                             & {\langle(1,2^{n-4})\rangle} \arrow[lu, no head] \\
   \vdots                                                                                        & \vdots                                                                                            & \vdots                                          \\
  {\langle(1,0),(0,2)\rangle}                                                                   & {\langle(0,1)\rangle}                                                                             & {\langle(1,1)\rangle}                           \\                                                                                            & \mathbb{Z}_2\oplus\mathbb{Z}_{2^{n-1}} \arrow[u, no head] \arrow[ru, no head] \arrow[lu, no head] &                                                
	\end{tikzcd}	$$
		
	\noindent Here all lines indicate extensions of index two. Combining these facts we have the  subfield lattice for $\mathbb{Q}(\zeta_{2^n})$:
	$$
	\begin{tikzcd}
	& \mathbb{Q}(\zeta_{2^{n+1}})  &       {}                                                             \\
	\mathbb{Q}(\zeta_{2^{n+1}}+\zeta_{2^{n+1}}^{-1}) \arrow[rru, phantom] \arrow[ru, no head] & \mathbb{Q}(\zeta_{2^{n}}) \arrow[u, no head]   & \mathbb{Q}(\zeta_{2^{n+1}}-\zeta_{2^{n+1}}^{-1}) \arrow[lu, no head]         \\
	\mathbb{Q}(\zeta_{2^{n}}+\zeta_{2^{n}}^{-1}) \arrow[rru, no head] \arrow[ru, no head] \arrow[u, no head] & \mathbb{Q}(\zeta_{2^{n-1}}) \arrow[u, no head]                        & \mathbb{Q}(\zeta_{2^{n}}-\zeta_{2^{n}}^{-1}) \arrow[lu, no head]  \\
	\mathbb{Q}(\zeta_{2^{n-1}}+\zeta_{2^{n-1}}^{-1}) \arrow[u, no head] \arrow[rru, no head] \arrow[ru, no head] & \mathbb{Q}(\zeta_{2^{n-2}}) \arrow[u, no head]                        & \mathbb{Q}(\zeta_{2^{n-1}}-\zeta_{2^{n-1}}^{-1}) \arrow[lu, no head] \\
	\vdots                                                                                                       & \vdots                                                                & \vdots                                                              \\
	\mathbb{Q}(\zeta_{8}+\zeta_{8}^{-1})                                                                         & \mathbb{Q}(i)                                                         & \mathbb{Q}(\zeta_{8}-\zeta_{8}^{-1})                                \\
	& \mathbb{Q} \arrow[lu, no head] \arrow[u, no head] \arrow[ru, no head] &                                                                      &  
	\end{tikzcd}		
	$$
	\noindent where all lines indicate extensions of order two.

	\section{Decomposition groups and fixed fields}
	Let $\zeta = \zeta_{2^{n+1}}$, $p$ a rational prime with $p\equiv 3 \pmod{ 4 }$,
	$A$ the natural number with $2^A|| p+1$, and let $\frak p$ be a prime ideal in
	$\Z[\zeta]$ containing $p$. Then $$\frak{p} = (p,~ \z^{2^{n-A+1}}+ \delta
	\z^{2^{n-A}}-1)$$ for some $\delta \in \Z$. Let $\sigma \in Aut(\Q(\z)/\Q)$
	be the automorphism of $\Q(\z)$ with $\z\mapsto \z^{-2^A-1}$.
	Then we have
	\begin{align*}
	\s \frak{p}&= (p, ~ \s(\z)^{2^{n-A+1}}+ \delta
	\s(\z)^{2^{n-A}}-1)\\ 
	&=(p, ~\z^{2^{n-A+1}(-2^{A}-1)}+ \delta \z^{2^{n-A}(-2^{A}-1)}-1)\\
	&=(p, ~\z^{-2^{n+1}}\z^{-2^{n-A+1}}+ \delta \z^{-2^{n}}\z^{-2^{n-A}}-1)\\
	&=(p, ~\z^{-2^{n-A+1}}- \delta \z^{-2^{n-A}}-1)\\ 
	&=(p,~-\z^{-2^{n-A+1}}\cdot(\z^{2^{n-A+1}}+ \delta \z^{2^{n-A}}-1))\\
	&=\frak p.
	\end{align*}
	We have used the fact that $\z$ is a unit in $\Z[\z].$

	Since $\z\mapsto \z^{-1}$ is an involution, the order of $\sigma$ is the order
	of $\z\mapsto \z^{2^A+1}$ (denoted by $ \s' $ ) which is the multiplicative order of $2^A+1$ in
	$(\Z/2^{n+1}\Z)^\ast$. We claim that, for $A\geq 2$, this
	order is $2^{n+1-A}$: First note that for $k\equiv 1\pmod { 4 }$,
	$$\mathrm{ord}_{(\Z/{2^{n+1}}\Z )^*}(k)=2^m$$ if and only if $2^{n+1}|| k^{2^m}-1$. This
	fact follows easily from the identity $$k^{2^{g+1}}-1= (k^{2^{g}}-1)
	(k^{2^{g}}+1)$$ and the fact that for $k=2^A+1$, we have $2|| (k^{2^{g}}+1)$.
	Now, that the multiplicative order of $2^A+1$ is $2^{n+1-A}$ follows from an
	induction argument using the above identity.
	
	The preceding two paragraphs prove that $\s$ lies in the decomposition group of
	$\frak p$ and that $\s$ has order $2^{n+1-A}$. It follows from a standard result
	in the theory of number fields that the decomposition group of $\frak p$ has
	order $2^{n+1-A}$. Thus $\langle \s \rangle$ is precisely the decomposition
	group of $\frak p$. Now recall the subfield/subgroup lattice for $\Q(\z)/\Q$ and
	its Galois group $\mathbb{Z}_{2^{n+1}}^*$. A simple computation shows that
	$\s $ fixes $\z^{2^{n-A}}-\z^{-2^{n-A} }$. But from the subfield
	lattice we can see that $$[\Q(\z): \Q(\z^{2^{n-A}}-\z^{-2^{n-A} })] = 2^{n+1-A}= |\langle
	\s \rangle |.$$ Thus $\Q(\z^{2^{n-A}}-\z^{-2^{n-A} })$ is precisely this fixed field.
	
	A similar, in fact easier, analysis can be carried out for $p\equiv 1 \pmod{4}$.
	In this case $$\frak{p} = (p,~ \z^{2^{n-A+1}}-u)$$ for some $u\in \Z$ and
	$2^A||p-1$. Then it is seen that $\s'$ fixes $\frak{p}$.
	As in the $3 \pmod{4}$ case, we know from a general result of algebraic number
	theory that the decomposition group of $\frak p$ has order $2^{n+1-A}$, which
	matches the order of $\sigma'$ (computed above). We see that
	$\Q(\z^{2^{n+1-A} })$ is contained in the fixed field of $\sigma'$, and again, by
	looking at the subfield lattice to find $[\Q(\z): \Q(\z^{2^{n+1-A}})] = 2^{n+1-A}$,
	we see that $\Q(\z^{2^{n+1-A} })$ is precisely the fixed field of the decomposition
	group of $\frak p$.

%
%
%

\end{document}